\newtheorem{thm}{Theorem}[section]
\newtheorem{lemma}[thm]{Lemma}
\newtheorem{prop}[thm]{Proposition}
\newtheorem{corollary}[thm]{Corollary}
\newtheorem{defi}[thm]{Definition}
\newtheorem{claim}[thm]{Claim}
\theoremstyle{remark}
\begin{document}
\title{Strengthening some complexity results on toughness of graphs}

\author[1,2]{Gyula Y Katona\thanks{kiskat@cs.bme.hu}}
\author[1,3]{Kitti Varga\thanks{vkitti@cs.bme.hu}}
\affil[1]{Department of Computer Science and
Information Theory, Budapest University of Technology and Economics, Hungary}
\affil[2]{MTA-ELTE Numerical Analysis and Large Networks Research Group, Hungary}
\affil[3]{HAS Alfr\'ed R\'enyi Institute of Mathematics, Hungary}

\maketitle

\begin{abstract}
 Let $t$ be a positive real number. A graph is called $t$-tough if the removal of any vertex set $S$ that disconnects the graph leaves at most $|S|/t$ components. The toughness of a graph is the largest $t$ for which the graph is $t$-tough.
 
 The main results of this paper are the following. For any positive rational number $t \le 1$ and for any $k \ge 2$ and $r \ge 6$ integers recognizing $t$-tough bipartite graphs is coNP-complete (the case $t=1$ was already known), and this problem remains coNP-complete for $k$-connected bipartite graphs, and so does the problem of recognizing 1-tough r-regular bipartite graphs. To prove these statements we also deal with other related complexity problems on toughness.
\end{abstract}

\section{Introduction}

All graphs considered in this paper are finite, simple and undirected. Let $\omega(G)$ denote the number of components, $\alpha(G)$ the independence number, $\kappa(G)$ the connectivity number and $\delta(G)$ the minimum degree of a graph $G$. For a vertex $v$ of $G$ the degree of $v$ is denoted by $d(v)$. (Using $\omega(G)$ to denote the number of components may be confusing, however, most of the literature on toughness uses this notation.)

The notion of toughness was introduced by Chv\'atal \cite{toughness_intro} to investigate hamiltonicity.

\begin{defi}
 Let $t$ be a real number. A graph $G$ is called {\em $t$-tough} if $|S| \ge t \omega(G-S)$ holds for any vertex set $S \subseteq V(G)$ that disconnects the graph (i.e. for any $S \subseteq V(G)$ with $\omega(G-S)>1$). The \emph{toughness} of $G$, denoted by $\tau(G)$, is the largest $t$ for which G is $t$-tough, taking $\tau(K_n) = \infty$ for all $n \ge 1$.
 
 We say that a cutset $S \subseteq V(G)$ is a {\em tough set} if $\omega(G - S) = |S|/\tau(G)$.
\end{defi}

Clearly, if a graph is Hamiltonian, then it must be 1-tough. However, not every 1-tough graph contains a Hamiltonian cycle: a well-known counterexample is the Petersen graph. On the other hand, Chv\'atal conjectured that there exists a constant $t_0$ such that every $t_0$-tough graph is Hamiltonian~\cite{toughness_intro}. This conjecture is still open, but it is known that, if exists, $t_0$ must be at least 9/4~\cite{9/4}.

The complexity of recognizing $t$-tough graphs has also been in the interest of research. This paper is motivated by two open problems regarding the complexity of recognizing 1-tough 3-connected bipartite graphs and 1-tough 3-regular bipartite graphs.

Let $t$ be an arbitrary positive rational number and consider the following problem.

\medskip
\noindent {\bf \scshape $\boldsymbol{t}$-Tough} \\
\textit{Instance:} a graph $G$. \\
\textit{Question:} is it true that $\tau(G) \ge t$?
\medskip

It is easy to see that for any positive rational number $t$ the problem {\scshape $t$-Tough} is in coNP: a witness is a vertex set $S$ whose removal disconnects the graph and leaves more than $|S|/t$ components. Bauer et al$.$ proved that this problem is coNP-complete~\cite{recognize_toughness} and the problem {\scshape $1$-Tough} remains coNP-complete for at least 3 regular graphs~\cite{recognize_toughness_regular}.

\begin{thm}[\cite{recognize_toughness}] \label{t_tough_conp_complete}
 For any positive rational number $t$ the problem {\scshape $t$-Tough} is coNP-complete.
\end{thm}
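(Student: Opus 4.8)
\emph{Plan.} Membership in coNP is already observed in the excerpt: a ``no'' instance (a graph with $\tau(G) < t$) is certified by a single cutset $S$ with $\omega(G-S) > |S|/t$, which is checkable in polynomial time. So the entire task is to establish coNP-hardness, i.e.\ to reduce some NP-complete problem to the complement problem \textsc{Non-$t$-Tough} (``does there exist $S \subseteq V(G)$ with $\omega(G-S) > |S|/t$?''). The natural choice of source problem is \textsc{Independent Set}, and the reason is the elementary bound $\tau(G) \le (|V(G)| - \alpha(G))/\alpha(G)$: taking $S = V(G) \setminus I$ for a maximum independent set $I$ (with $|I| \ge 2$) leaves exactly $\alpha(G)$ isolated components, so small toughness is forced by a large independent set. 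The whole reduction is an attempt to make this inequality the \emph{binding} constraint.

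\emph{Construction.} Write $t = a/b$ in lowest terms. Given an instance $(H,k)$ of \textsc{Independent Set} with $H = (V,E)$ and $n = |V|$, I would build in polynomial time a graph $G$ that contains an induced copy of $H$ on vertices $x_1,\dots,x_n$, together with (i) a ``core'' set joined completely to all the $x_i$, whose role is to make any cutset that does not separate the gadgets highly inefficient, and (ii) for each $x_i$ a pendant gadget (a bundle of independent leaves, or a small clique, of size scaled by $a$ and $b$) that contributes a controlled surplus of components precisely when $x_i$ survives as an isolated vertex of $G - S$. The multiplicities in the gadgets are chosen as fixed constants depending only on $a$ and $b$, so the construction remains polynomial since $t$ is not part of the input. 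The target equivalence is $\tau(G) < t \iff \alpha(H) \ge k$.

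\emph{Correctness.} The forward direction is routine: given an independent set $I \subseteq V(H)$ of size $k$, I would let $S$ be the core together with $\{x_j : v_j \notin I\}$ and verify by direct counting that $a\,\omega(G-S) > b\,|S|$, using that the $x_i$ for $v_i \in I$ are pairwise nonadjacent and each drags along its pendant gadget as separate components. The backward direction is the heart of the matter: starting from an arbitrary cutset $S$ with $\omega(G-S) > |S|/t$, I must argue by an exchange/uncrossing argument that $S$ can be transformed, without decreasing the ratio $\omega(G-S)/|S|$, into the canonical form (core plus the complement of an independent set of $H$), so that the surviving $x_i$ form an independent set of size at least $k$ in $H$.

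\emph{Main obstacle.} The crux is exactly this canonical-form step together with the arithmetic tuning. I expect the difficulty to be twofold: first, proving the exchange argument really is ratio-nondecreasing in all cases (e.g.\ a cutset that keeps part of a pendant gadget, or that splits the core, must be shown to be no better than a canonical one), and second, choosing the gadget sizes as functions of $a$ and $b$ so that the threshold in $\omega/|S|$ lands \emph{exactly} at $t$ for every rational $t$, not merely at $t=1$. The coNP-membership and the forward implication are comparatively mechanical; the rigorous optimality of the canonical cutset is where the work lies.
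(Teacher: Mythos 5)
Your overall strategy --- reducing \textsc{IndependentSet} to the complement of \textsc{$t$-Tough} via a graph built from $H$, a dominating ``core'', and per-vertex gadgets whose sizes are tuned by $a$ and $b$ --- is exactly the strategy behind the paper's alternative proof of this theorem (the construction $G_k$ of Section~\ref{section_general_graphs}, a modification of Bauer--Morgana--Schmeichel, which yields $\alpha(G)>k$ if and only if $\tau(G_k)<t$). However, your proposal has genuine gaps, one of which is an outright flaw in the construction as stated. You join the core only ``to all the $x_i$'', not to the pendant gadgets. Then the leaves really are pendant vertices: deleting a single $x_i$ costs $1$ and already creates several components (its isolated leaves plus the rest of the graph), so $\tau(G)$ is capped by a small constant independently of $\alpha(H)$, and the target equivalence $\tau(G)<t \iff \alpha(H)\ge k$ fails for most $t$ (in particular for $t=1$). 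Equally important, without the core dominating \emph{every} vertex you lose the statement ``every cutset contains the core'', which is the first step of the cutset analysis; in the paper $W$ is joined to all of $V\cup U\cup U'$ for precisely this reason. A second sign of trouble is that your component-counting mechanism is inverted: you say the gadget of $x_i$ contributes surplus components when $x_i$ \emph{survives}, whereas in a correct accounting a surviving vertex drags its pendants along as part of a \emph{single} component, and the surplus comes from the isolated pendants of the \emph{deleted} vertices.

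The second gap is that the part you defer as the ``main obstacle'' is the proof itself, and your remaining design choices are underdetermined or wrong there. You keep $H$ on single vertices and assert that all multiplicities can depend only on $a$ and $b$; in the paper each vertex of $H$ is blown up into an $a$-clique with $b$ pendants (so deleting one unit costs exactly $a$ and releases exactly $b$ components, matching the ratio $a/b$), and, crucially, two global parts scale with $k$: the core clique $W$ has size $ak$, and an extra independent set $U'$ of size $(b-1)k$ is attached to $W$ alone. These $k$-dependent parts are what place the threshold exactly at independence number $k$: they produce the identity $|S'| = a\bigl(|I|+k\bigr)$ and the bound $\omega(G_k-S') \le b\bigl(|I|+k\bigr)+\bigl(\alpha(G)-k\bigr)$, so that the comparison of $|S'|/\omega(G_k-S')$ with $a/b$ is decided purely by the sign of $\alpha(G)-k$. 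Finally, no delicate exchange/uncrossing argument is needed: once the core dominates the whole graph, every cutset $S$ contains it, and discarding from $S$ all gadget vertices and all proper parts of cliques only decreases $|S|$ without decreasing the number of components; the components of the reduced cutset then split into isolated pendants (counted exactly) and components meeting the blown-up copy of $H$, of which there are at most $\alpha(H)$ because choosing one vertex from each yields an independent set. Since your write-up names these steps as open difficulties rather than carrying them out, it is a plan for a proof rather than a proof.
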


\begin{thm}[\cite{recognize_toughness_regular}] \label{1_tough_regular_conp_complete}
 For any fixed integer $r \ge 3$ the problem {\scshape $1$-Tough} is coNP-complete for $r$-regular graphs.
\end{thm}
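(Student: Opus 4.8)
The plan is to show membership in coNP (which is immediate: a cutset $S$ with $\omega(G-S) > |S|$ is a polynomial-time checkable witness of non-$1$-toughness, and restricting $G$ to be $r$-regular does not change this), and then to establish coNP-hardness by a polynomial-time reduction from the general problem {\scshape $1$-Tough}, which is coNP-complete by Theorem~\ref{t_tough_conp_complete} (taking $t=1$). Concretely, the task is to build from an arbitrary graph $G$ an $r$-regular graph $G'$, computable in polynomial time, such that $\tau(G') \ge 1$ if and only if $\tau(G) \ge 1$.

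The construction would proceed in two stages: first normalize the degrees from above, then fill them in from below. For the first stage, any vertex of degree greater than $r$ must be \emph{split} into several vertices of smaller degree joined by a small internal gadget; this replacement has to be \emph{toughness-neutral}, meaning that it changes neither the $1$-toughness status nor, ideally, the value of $\max_S\bigl(\omega(G-S)-|S|\bigr)$ around the threshold. For the second stage, each vertex whose degree is still below $r$ is connected, via at least two edges (to avoid creating cut vertices or small separators, which would automatically destroy $1$-toughness), to copies of a fixed \emph{filler gadget} $\Gamma$. The gadget $\Gamma$ should be $r$-regular except at a few designated connector vertices of degree $r-1$ or $r-2$, and it should be internally very tough, so that in $G'$ it is never advantageous to place gadget vertices into a separating set: formally, one wants that for every $S' \subseteq V(G')$ the gadget contributes at most $|S' \cap V(\Gamma)|$ to the component count, so that any toughness violation in $G'$ can be pushed onto a cutset lying entirely in the image of $V(G)$.

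With such gadgets in hand, correctness splits into two directions. If $G$ is $1$-tough, then for any cutset $S'$ of $G'$ I would decompose $S'$ into its original part and its gadget part and use the toughness-neutrality of the gadgets to bound $\omega(G'-S')$ by $|S'|$, reducing the inequality to the $1$-toughness of $G$. Conversely, if $G$ has a violating cutset $S$ with $\omega(G-S) > |S|$, I would extend $S$ to a cutset $S'$ of $G'$, adding only the connector vertices needed to detach the gadgets hanging off the isolated pieces, and verify that the surplus of components survives, so that $G'$ is not $1$-tough either. The handshake constraint forces the total degree deficiency to be even, so one may need an extra parity gadget or a dummy vertex to absorb any mismatch; this is routine but necessary bookkeeping.

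The main obstacle is the simultaneous control of both the degree normalization (especially the \emph{downward} splitting of high-degree vertices when $r$ is small, for instance $r=3$) and the toughness value. Vertex splitting and gadget attachment each perturb $\omega(G-S)-|S|$, and the delicate point is to design these gadgets so that their perturbations cancel exactly at the threshold $t=1$ in \emph{both} directions at once; producing a gadget that is rigid enough never to help a separator yet flexible enough to be attached at arbitrary deficient vertices while staying $r$-regular is where essentially all the work lies. A cleaner alternative, which I would pursue if the two-stage normalization proves unwieldy, is to bypass Theorem~\ref{t_tough_conp_complete} and reduce directly from a bounded-degree NP-complete problem (such as a bounded-occurrence variant of $3$-SAT, or an independent-set or hamiltonicity problem on cubic graphs), building variable- and clause-gadgets that are $r$-regular by design, so that the final graph is automatically regular and a violating cutset corresponds exactly to a satisfying assignment.
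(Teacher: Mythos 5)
This theorem is not proved in the paper at all: it is quoted as a known result of Bauer, van den Heuvel, Morgana and Schmeichel \cite{recognize_toughness_regular}, and the paper only uses it as a black box (in the reductions of Lemmas~\ref{1/2_tough_odd_regular_conp_complete} and~\ref{1/2_tough_even_regular_conp_complete}). So there is no in-paper proof to compare against; your attempt has to be judged on its own.

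Judged that way, what you have written is a research plan, not a proof, and the gap is exactly where you yourself locate it: the two gadgets that carry the entire argument (the splitting gadget for vertices of degree greater than $r$ and the filler gadget $\Gamma$ for vertices of degree less than $r$) are never constructed, and their required properties are only postulated. These properties are not routine. For the filler gadget, attaching $\Gamma$ to a deficient vertex $v$ by two edges must not create a small separator, yet the two attachment edges themselves are a cut of size $2$; one has to verify that cutting them (by deleting their endpoints) never yields a component surplus, which depends on the internal structure of $\Gamma$ and on how many gadgets hang off the same vertex. For the splitting stage, replacing a vertex $v$ of degree $d>r$ by several vertices joined by an internal structure genuinely perturbs $\omega(G-S)-|S|$ in both directions: a separator may now take all copies of $v$ (paying more to separate the same pieces, which can destroy a violation that existed in $G$), or it may separate the copies of $v$ from one another (creating a violation that did not exist in $G$). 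Showing that these effects cancel exactly at the threshold $t=1$, simultaneously for all cutsets and for every $r\ge 3$, is the substance of the theorem, and the proposal contains no argument for it --- only the requirement that one exist. Your fallback suggestion (a direct reduction from a bounded-degree NP-complete problem with gadgets that are $r$-regular by design) is in fact closer to how the cited papers proceed, but it is likewise only named, not executed, so it cannot close the gap either.
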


Although the toughness of any bipartite graph (except for the graphs $K_1$ and $K_2$) is at most one,
the problem {\scshape $1$-Tough} does not become easier for bipartite graphs.

\begin{thm}[\cite{recognize_toughness_bipartite}] \label{1_tough_bipartite_conp_complete}
 The problem {\scshape $1$-Tough} is coNP-complete for bipartite graphs.
\end{thm}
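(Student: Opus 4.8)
\emph{Proof strategy.} Containment in coNP is immediate and needs no new idea: a vertex set $S\subseteq V(G)$ with $\omega(G-S)>1$ and $\omega(G-S)\ge |S|+1$ is a certificate of size polynomial in $|V(G)|$ that $\tau(G)<1$, and it is checkable in polynomial time. So the whole content is the hardness. The plan is to give a polynomial-time reduction that takes an instance of a problem already known to be coNP-complete and outputs a bipartite graph $B$ that is $1$-tough exactly on the ``yes'' instances. Since every bipartite graph other than $K_1$ and $K_2$ has toughness at most $1$, for such a $B$ being $1$-tough is the same as $\tau(B)=1$; equivalently, I must arrange that $B$ admits a cut $S$ with $\omega(B-S)\ge |S|+1$ precisely on the complementary instances. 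The most convenient source is the general problem {\scshape $1$-Tough}, which is coNP-complete by Theorem~\ref{t_tough_conp_complete} (its $t=1$ case), so the task reduces to \emph{bipartizing} an arbitrary graph $G$ while pinning its toughness to the threshold $1$.

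The construction must respect two structural facts that I would use as guide rails. First, if $B$ has parts $X,Y$, then deleting $X$ leaves the $|Y|$ vertices of $Y$ mutually nonadjacent, so $1$-toughness forces $|X|=|Y|$; the reduction therefore has to output a \emph{balanced} bipartite graph, which I would achieve by padding. Second, deleting a one-sided set $S\subseteq X$ isolates exactly those $y\in Y$ with $N(y)\subseteq S$, so one-sided cuts only ever detect a failure of Hall's condition, i.e. a perfect-matching obstruction, and that is polynomially decidable. Hence the encoded hard instance cannot be read off from one-sided cuts and must live in genuinely \emph{two-sided} cuts. Concretely I would represent the vertices of $G$ together with its adjacencies by gadget blocks, joined by a ``frame'' of high internal toughness, and arrange matters so that a cut $S$ of $B$ decomposes as $S_X\cup S_Y$ simulating the choice of a cutset $S_G$ of $G$: each component of $G-S_G$ contributes one component (or one isolated gadget vertex) of $B-S$, while the frame keeps everything that was not deliberately isolated inside a single component. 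The padding and the frame would be tuned so that the maximum of $\omega(B-S)-|S|$ over all cuts is at least $1$ exactly when the same quantity is at least $1$ for $G$, giving $\tau(B)\ge 1 \iff \tau(G)\ge 1$.

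Correctness then splits into two implications. The forward direction is the easy one: from a bad cut of $G$ one reads off the corresponding structured two-sided cut of $B$ with the same surplus of components. The hard part will be the converse — ruling out \emph{spurious} witnesses. In a bipartite graph the adversary may delete vertices from both parts in unmotivated patterns, and I must prove that no such cut ever beats the intended ``assignment-type'' cuts. This is exactly the role of the high-toughness frame and of the balancing: one shows that deleting a frame vertex, or deleting an unbalanced mixture from the two sides, can never increase the component surplus, so every optimal cut can be normalized to a frame-respecting cut that projects back to a cutset of $G$. Carrying out this normalization argument, while simultaneously keeping the two parts exactly equal in size and the construction polynomial in $|V(G)|$, is where essentially all the difficulty lies; once it is in place, coNP-hardness follows, and with the trivial membership it yields coNP-completeness for bipartite graphs.
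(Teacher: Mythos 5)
Your high-level framing is right (membership in coNP is trivial; the content is coNP-hardness via a bipartizing reduction), and your two guide-rail observations are correct: a $1$-tough bipartite graph must have balanced color classes, and one-sided cuts only witness failures of Hall's condition. But the proposal has a genuine gap, and you name it yourself: the reduction is never actually constructed. The ``gadget blocks,'' the ``frame of high internal toughness,'' the padding, and --- decisively --- the normalization argument showing that no unstructured two-sided cut of $B$ beats the intended assignment-type cuts are all described only by the properties they would need to have. Since you concede that this last step ``is where essentially all the difficulty lies,'' what remains is a specification of a proof rather than a proof: there is no graph $B$ one can point to and no argument one can check, so neither direction of $\tau(B)\ge 1 \iff \tau(G)\ge 1$ is established. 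The concern is not pedantic, because naive bipartizations genuinely fail: subdividing every edge of a cubic graph $G$ yields a bipartite graph of toughness at most $2/3$ (delete the $n$ original vertices to isolate the $3n/2$ subdivision vertices), no matter how tough $G$ was.

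For comparison, the paper resolves exactly this difficulty with a fully explicit construction $B(G)$: each vertex $v_i$ becomes an edge $v_{i,1}v_{i,2}$, and each edge $v_iv_j$ of $G$ becomes the two cross edges $v_{i,1}v_{j,2}$ and $v_{i,2}v_{j,1}$. Claim~\ref{toughness_of_BG} then shows $\tau\big(B(G)\big)=\min\big(2\tau(G),1\big)$; its anti-spurious-cut argument is a clean decomposition of an arbitrary cut $S'$ into fully deleted pairs ($S_2'$) and half-deleted pairs ($S_1'$), with pair-respecting components controlled through $S_2'$ as in $G$ and the remaining components bounded by $|S_1'|$, plus the spanning-subgraph trick of Proposition~\ref{spanning_subgraph_with_toughness_1/2} to handle $\tau(G)>1/2$. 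The hardness then follows by reducing {\scshape $1/2$-Tough} (coNP-complete by Theorem~\ref{t_tough_conp_complete}) to {\scshape $1$-Tough} on bipartite graphs, since $\tau\big(B(G)\big)=1$ iff $\tau(G)\ge 1/2$. Note that your one concrete commitment --- reducing from {\scshape $1$-Tough} rather than from {\scshape $1/2$-Tough} --- is incompatible with this doubling construction, because $B(G)$ is $1$-tough for \emph{every} $G$ with $\tau(G)\ge 1/2$ and hence cannot separate $1$-tough inputs from merely $1/2$-tough ones. A reduction sourced at {\scshape $1$-Tough} requires a different gadget; that is what Kratsch, Lehel and M\"uller's original proof supplies, joining $v_{i,1}$ to $v_{i,2}$ by a path with two interior vertices instead of an edge. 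So to complete your argument you would either have to switch the source problem and prove the analogue of Claim~\ref{toughness_of_BG}, or exhibit and analyze a path-type gadget; neither step is present.
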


Let $t$ be an arbitrary positive rational number and now consider a variant of the problem {\scshape $t$-Tough}.

\medskip
\noindent {\bf \scshape Exact-$\boldsymbol{t}$-Tough} \\
\textit{Instance:} a graph $G$. \\
\textit{Question:} is it true that $\tau(G)=t$?
\medskip

Extremal problems usually seem not to belong to $\text{NP} \cup \text{coNP}$, therefore a complexity class called DP was introduced by Papadimitriou and Yannakakis~\cite{dp_intro}.

\begin{defi}
 A language $L$ is in the class \emph{DP} if there exist two languages $L_1 \in \text{NP}$ and $L_2 \in \text{coNP}$ such that $L = L_1 \cap L_2$.
 
 A language is called \emph{DP-hard} if all problems in DP can be reduced to it in polynomial time. A language is \emph{DP-complete} if it is in DP and it is DP-hard.
\end{defi}

We mention that $\text{DP} \ne \text{NP} \cap \text{coNP}$ if $\text{NP} \ne \text{coNP}$. Moreover, $\text{NP} \cup \text{coNP} \subseteq \text{DP}$. Now we present some related DP-complete problem.

\medskip
\noindent {\bf \scshape ExactClique} \\
\textit{Instance:} a graph $G$ and a positive rational number $k$. \\
\textit{Question:} is it true that the largest clique of $G$ has size exactly $k$?

\begin{thm}[\cite{dp_intro}] \label{exactclique_dp_complete}
 The problem {\scshape ExactClique} is DP-complete.
\end{thm}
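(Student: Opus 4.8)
The plan is to prove the two halves of DP-completeness separately: membership in DP and DP-hardness.

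For membership, observe that ``the largest clique of $G$ has size exactly $k$'' is the conjunction of ``$G$ has a clique of size at least $k$'' and ``$G$ has no clique of size $k+1$''. The first statement defines a language in NP (a witness is a set of $k$ pairwise adjacent vertices), and the second defines a language in coNP, since its complement ``$G$ has a clique of size $k+1$'' is in NP. Hence {\scshape ExactClique} is the intersection of an NP language and a coNP language, so by definition it lies in DP.

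For hardness I would reduce from the standard DP-complete problem {\scshape Sat-Unsat}: given two Boolean formulas $\varphi$ and $\psi$, decide whether $\varphi$ is satisfiable and $\psi$ is unsatisfiable. Writing $c(H)$ for the clique number (the maximum size of a clique) of a graph $H$, I would start from the classical Karp reduction from satisfiability to clique: for a formula $\chi$ in conjunctive normal form with $m$ clauses one builds a graph $G_\chi$ with one vertex per literal occurrence and an edge between two consistent literals lying in different clauses, so that $c(G_\chi)=m$ if $\chi$ is satisfiable and $c(G_\chi)\le m-1$ otherwise. To obtain a gadget whose clique number is \emph{fully controlled} I would take the disjoint union $P_\chi := G_\chi \cup K_{m-1}$; since the appended clique $K_{m-1}$ dominates every clique of $G_\chi$ in the unsatisfiable case, one gets $c(P_\chi)=m$ exactly when $\chi$ is satisfiable and $c(P_\chi)=m-1$ exactly when $\chi$ is unsatisfiable.

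Given an instance $(\varphi,\psi)$ with $m_1$ and $m_2$ clauses respectively, I would form $P_\varphi$ and $P_\psi$ and combine them using two standard identities: the clique number of a join satisfies $c(H_1\vee H_2)=c(H_1)+c(H_2)$, and the blow-up $H[K_2]$ (each vertex replaced by two adjacent copies, with all cross-edges across original edges) satisfies $c(H[K_2])=2c(H)$. I would then output
$$G := P_\varphi[K_2] \vee P_\psi, \qquad k := 2m_1 + m_2 - 1 .$$
Its clique number is $c(G)=2c(P_\varphi)+c(P_\psi)$, which equals $2m_1+m_2$, $2m_1+m_2-1$, $2m_1+m_2-2$, or $2m_1+m_2-3$ according to whether $(\varphi,\psi)$ is (SAT,\,SAT), (SAT,\,UNSAT), (UNSAT,\,SAT), or (UNSAT,\,UNSAT). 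Thus $c(G)=k$ holds precisely when $\varphi$ is satisfiable and $\psi$ is unsatisfiable, which is exactly the {\scshape Sat-Unsat} condition; as all gadgets have polynomial size, this is a polynomial-time reduction, and together with membership it gives DP-completeness.

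The step I expect to require the most care is the combination, not the gadget. If one simply joined the two pinned gadgets $P_\varphi\vee P_\psi$, the cases (SAT,\,UNSAT) and (UNSAT,\,SAT) would both yield clique number $m_1+m_2-1$ and hence be indistinguishable by a single exact-value query. The purpose of the $2$-blow-up of $P_\varphi$ is exactly to separate the two ``scales'' so that the four attainable clique numbers become distinct consecutive integers and the target $k$ isolates the desired case; verifying the clique-number identities for the join and the blow-up is the only genuinely technical point, and both are routine.
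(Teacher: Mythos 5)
The paper never proves this theorem: it is imported as a black box from Papadimitriou and Yannakakis \cite{dp_intro} (and used only to derive Corollary~\ref{exactindepnumber_dp_complete}), so there is no in-paper argument to compare yours against line by line. What you have written is a self-contained reconstruction of the cited result, and it is correct. The membership half is the standard decomposition of \textsc{ExactClique} into an NP language (clique of size at least $k$) intersected with a coNP language (no clique of size $k+1$). For hardness, reducing from \textsc{Sat-Unsat} is the canonical route; you are entitled to quote its DP-completeness as the generic complete problem (it follows from the definition of DP together with Cook's theorem applied to each coordinate). Your gadget analysis checks out: $c(P_\chi)=\max\bigl(c(G_\chi),m-1\bigr)$ equals $m$ exactly when $\chi$ is satisfiable and $m-1$ otherwise; the identities $c(H_1\vee H_2)=c(H_1)+c(H_2)$ and $c\bigl(H[K_2]\bigr)=2c(H)$ are both correct (a clique of $H[K_2]$ uses at most two copies of each vertex and projects to a clique of $H$); and with $k=2m_1+m_2-1$ the four satisfiability patterns yield the four distinct values $2m_1+m_2,\dots,2m_1+m_2-3$, of which only $(\mathrm{SAT},\mathrm{UNSAT})$ hits $k$. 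Your observation that the plain join $P_\varphi\vee P_\psi$ would conflate $(\mathrm{SAT},\mathrm{UNSAT})$ with $(\mathrm{UNSAT},\mathrm{SAT})$, forcing the factor-of-two scale separation, identifies precisely the crux of such reductions, and it is the same kind of device used in the original Papadimitriou--Yannakakis argument. In short: correct proof, filling in a result the paper only cites.
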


By taking the complement of the graph, we can obtain {\scshape ExactIndependenceNumber} from {\scshape ExactClique}.

\medskip
\noindent {\bf \scshape ExactIndependenceNumber} \\
\textit{Instance:} a graph $G$ and a positive rational number $k$. \\
\textit{Question:} is it true that $\alpha(G)=k$?
\medskip

Since the clique number of a graph is exactly $k$ if and only if the independence number of its complement is exactly $k$, it follows from Theorem~\ref{exactclique_dp_complete} that the problem {\scshape ExactIndependenceNumber} is also DP-complete.

\begin{corollary} \label{exactindepnumber_dp_complete}
 The problem {\scshape ExactIndependenceNumber} is DP-complete.
\end{corollary}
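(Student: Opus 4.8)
The plan is to prove this corollary by exploiting the well-known duality between cliques and independent sets under graph complementation, turning {\scshape ExactClique} into {\scshape ExactIndependenceNumber} by a simple polynomial-time reduction. Recall that for any graph $G$, a vertex set $S$ induces a clique in $G$ if and only if $S$ is an independent set in the complement graph $\overline{G}$. Consequently, the largest clique of $G$ has size exactly $k$ precisely when $\alpha(\overline{G}) = k$.

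To establish membership in DP, I would first decompose the condition $\alpha(G) = k$ into two parts. The condition $\alpha(G) \ge k$ lies in NP, since an independent set of size $k$ serves as a polynomial-size certificate that can be verified in polynomial time. The condition $\alpha(G) \le k$ lies in coNP, since a violating witness is an independent set of size $k+1$. Because $\alpha(G) = k$ holds if and only if both $\alpha(G) \ge k$ and $\alpha(G) \le k$ hold, the language {\scshape ExactIndependenceNumber} is the intersection of an NP language and a coNP language, and hence belongs to DP.

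For DP-hardness, I would reduce from {\scshape ExactClique}, which is DP-complete by Theorem~\ref{exactclique_dp_complete}. Given an instance $(G, k)$ of {\scshape ExactClique}, map it to the instance $(\overline{G}, k)$ of {\scshape ExactIndependenceNumber}. Forming $\overline{G}$ takes polynomial time, and by the complementation duality the largest clique of $G$ has size exactly $k$ if and only if $\alpha(\overline{G}) = k$; thus the reduction is correct and runs in polynomial time, so every problem in DP reduces to {\scshape ExactIndependenceNumber}. Combining the two parts yields DP-completeness.

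Since the reduction is nothing more than complementation and the duality is immediate, I do not expect any genuine obstacle here. The only point requiring a moment of care is the DP-membership argument, where one must phrase the two inequalities as a bona fide NP language and a bona fide coNP language so that their intersection matches the exact equality $\alpha(G)=k$.
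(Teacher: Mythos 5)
Your proposal is correct and follows essentially the same route as the paper: the paper also derives the corollary from Theorem~\ref{exactclique_dp_complete} via the observation that the clique number of $G$ equals $k$ exactly when $\alpha(\overline{G}) = k$, with complementation serving as the polynomial-time reduction. Your explicit verification of DP-membership (as the intersection of an NP language and a coNP language) is a detail the paper leaves implicit, but it does not change the argument.
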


In this paper, first, we prove that {\scshape Exact-$t$-Tough} is DP-complete for any positive rational number $t$, moreover, if $t<1$, then the problem remains DP-complete for bipartite graphs. Note that since the toughness of any bipartite graph (except for $K_1$ and $K_2$) is at most 1, the problem {\scshape Exact-$1$-Tough-Bipartite} is coNP-complete as stated in Theorem~\ref{1_tough_bipartite_conp_complete}.

\begin{thm} \label{exact_t_tough_dp_complete}
 For any positive rational number $t$ the problem {\scshape Exact-$t$-Tough} is DP-complete.
\end{thm}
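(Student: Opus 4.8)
The plan is to prove both membership in DP and DP-hardness, reducing from {\scshape ExactIndependenceNumber} (Corollary~\ref{exactindepnumber_dp_complete}). For membership, I would write the condition \(\tau(G)=t\) as the conjunction of \(\tau(G)\ge t\) and \(\tau(G)\le t\). The first is exactly the problem {\scshape $t$-Tough}, which lies in coNP. For the second, note that \(\tau(G)=\min_S |S|/\omega(G-S)\) over all cutsets \(S\); hence \(\tau(G)\le t\) holds iff there is a cutset \(S\) with \(|S|\le t\,\omega(G-S)\), and such an \(S\) is a polynomially verifiable witness, so \(\tau(G)\le t\) is in NP. (A complete graph has no cutset and thus fails this NP condition, correctly excluding the case \(\tau=\infty\).) Therefore {\scshape Exact-$t$-Tough} is the intersection of an NP language and a coNP language, i.e. it is in DP.

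For hardness, the idea is to reuse the construction behind the coNP-completeness of {\scshape $t$-Tough} (Theorem~\ref{t_tough_conp_complete}), exploiting that it yields the toughness as an \emph{exact} function of the independence number rather than merely a threshold relation. Given an instance \((G,k)\) of {\scshape ExactIndependenceNumber}, I would build a graph \(H=H(G)\), with attached gadgets whose sizes depend on the numerator and denominator of \(t\) and on \(|V(G)|\), in which every tough set corresponds to a maximum independent set of \(G\) together with gadget vertices, so that \(\tau(H)=\phi(\alpha(G))\) for an explicit, strictly monotone rational function \(\phi\). The gadget parameters are chosen so that \(\phi(k)=t\); since \(\phi\) is injective on the relevant range, \(\tau(H)=t\) holds if and only if \(\alpha(G)=k\), which is exactly the reduction we need. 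Scaling the gadget (for instance by vertex multiplication) lets us realize an arbitrary rational target \(t\).

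The main obstacle is establishing the identity \(\tau(H)=\phi(\alpha(G))\) exactly, that is, both inequalities. Proving \(\tau(H)\le\phi(\alpha(G))\) amounts to exhibiting the cutset induced by a maximum independent set and computing its ratio, which is routine. The harder direction is \(\tau(H)\ge\phi(\alpha(G))\): one must verify that \emph{no} cutset of \(H\) achieves a smaller ratio than the independent-set cutset, in particular that ``mixed'' cutsets lying partly inside the gadgets and partly inside the copy of \(G\) cannot do better. This requires designing the gadgets rigidly enough, for example making them highly connected so that cutting into a gadget is never profitable, followed by a careful case analysis over the possible intersections of an arbitrary cutset with \(G\) and with each gadget. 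Once this exact correspondence is secured, the reduction and hence DP-hardness follow immediately, completing the proof together with the membership argument above.
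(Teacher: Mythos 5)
Your DP-membership argument is correct, and in fact more direct than the paper's: you certify $\tau(G)\le t$ by exhibiting a cutset $S$ with $|S|\le t\,\omega(G-S)$, using that for a connected noncomplete graph the toughness is attained by some cutset (and that the empty set works for disconnected graphs, while complete graphs have no cutset at all). The paper reaches the same conclusion by a detour: it first proves a gap result (Corollary~\ref{toughness_gap}) and rewrites $\{\,G : \tau(G)\le t\,\}$ as $\{\,G : \tau(G)< t+1/|V(G)|^2\,\}$; your certificate makes that lemma unnecessary for this step. Your hardness strategy is also exactly the paper's: reduce {\scshape ExactIndependenceNumber}, building a graph whose toughness is a strictly decreasing function of $\alpha(G)$, calibrated so that the value $t$ is attained precisely when $\alpha(G)=k$.

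The problem is that your hardness half is a schema, not a proof. The gadget is never specified, the function $\phi$ is never written down, and the step you yourself flag as ``the harder direction'' --- showing that no mixed cutset beats the cutset induced by a maximum independent set --- is exactly where all the content of the theorem lies, and it is not carried out. For comparison, the paper's construction $G_k$ (with $t=a/b$) is: replace each vertex $v_i$ of $G$ by an $a$-clique $V_i$, join $V_{i_1}$ and $V_{i_2}$ completely whenever $v_{i_1}v_{i_2}\in E(G)$, attach $b$ private vertices $u_{i,1},\dots,u_{i,b}$ to each $V_i$, and add an independent set $U'$ of size $(b-1)k$ together with a clique $W$ of size $ak$ joined to all other vertices. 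The lower bound then rests on structural reductions of an arbitrary cutset $S$: every cutset must contain $W$; discarding from $S$ the vertices of $U\cup U'$ and all partial intersections $S\cap V_i$ (for indices with $V_i\not\subseteq S$) can only decrease the ratio; and the remaining components split into isolated gadget vertices (counted exactly) and components meeting $V$ (at most $\alpha(G)$ of them), yielding $|S|/\omega(G_k-S)\ge a(|I|+k)\big/\bigl(b(|I|+k)+\alpha(G)-k\bigr)$ with $I=\{i : V_i\subseteq S\}$. Without exhibiting a concrete gadget and performing an analysis of this kind, the equivalence $\tau(H)=t \Leftrightarrow \alpha(G)=k$ remains an unsupported claim. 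A minor further remark: you demand slightly more than is needed, namely an exact identity $\tau(H)=\phi(\alpha(G))$ on the whole range; the paper only proves the three one-sided implications ($\alpha<k \Rightarrow \tau>t$, $\alpha=k \Rightarrow \tau=t$, $\alpha>k \Rightarrow \tau<t$), which suffice for the reduction.
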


\begin{thm} \label{exact_t_tough_bipartite_dp_complete}
 For any positive rational number $t < 1$ the problem {\scshape Exact-$t$-Tough} remains DP-complete for bipartite graphs.
\end{thm}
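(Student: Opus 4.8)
The plan is to establish DP-membership and DP-hardness separately, keeping every instance bipartite throughout. For membership, I would write {\scshape Exact-$t$-Tough} restricted to bipartite inputs as $L_1 \cap L_2$ with $L_1 = \{H : \tau(H) \le t\}$ and $L_2 = \{H : \tau(H) \ge t\}$. A cutset $S$ with $\omega(H-S) > 1$ and $|S| \le t\,\omega(H-S)$ is a polynomial-time checkable witness for $L_1$, so $L_1 \in \mathrm{NP}$; and $L_2 \in \mathrm{coNP}$ exactly as already observed for {\scshape $t$-Tough} in the introduction, a witness against membership being a cutset with $|S| < t\,\omega(H-S)$. Since bipartiteness is decidable in polynomial time, the restriction to bipartite graphs does not affect membership in DP.

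For hardness I would reduce from {\scshape ExactIndependenceNumber}, which is DP-complete by Corollary~\ref{exactindepnumber_dp_complete}. Given an instance $(G,k)$, the aim is to build in polynomial time a bipartite graph $H = H(G,k,t)$ whose toughness is a strictly decreasing function of $\alpha(G)$, calibrated so that $\tau(H) = t$ precisely when $\alpha(G) = k$. Concretely, I would attach a bipartite gadget to a vertex-encoding of $G$ so that every independent set $I$ of $G$ yields a cutset whose removal isolates $|I|$ gadget-components, giving a ratio that decreases in $|I|$, and so that the gadget sizes (chosen in terms of the reduced fraction $t = a/b$ and of $k$) make the ratio for a maximum independent set equal $t$ exactly at $\alpha(G)=k$. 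This separates the two directions DP requires: an independent set of size $k$ produces a cut of ratio at most $t$, yielding the $\tau(H) \le t$ (NP) side, whereas $\alpha(G) \le k$ must force every cutset to have ratio at least $t$, yielding the $\tau(H) \ge t$ (coNP) side. Keeping $H$ bipartite is compatible with the target value because bipartite graphs have toughness at most $1$ while we assume $t<1$; wherever the non-bipartite construction behind Theorem~\ref{exact_t_tough_dp_complete} would use a join or a clique, I would substitute a complete bipartite block so that the two sides of $H$ absorb the gadget without creating odd cycles.

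The main obstacle is the coNP direction, namely proving that $H$ admits no cut cheaper than the intended one, i.e. $\tau(H)\ge t$ whenever $\alpha(G)\le k$. Bipartite graphs have many a priori cheap cuts (deleting one whole side isolates the other), so the bulk of the work is a structural argument showing that an arbitrary cutset $S$ of $H$ can be rerouted to a cutset arising from an independent set of $G$ without increasing the ratio $|S|/\omega(H-S)$, whence $\tau(H)$ is governed solely by $\alpha(G)$. I would finally verify strictness of the monotonicity across the threshold, that $\alpha(G)=k\pm1$ pushes $\tau(H)$ strictly off $t$, since this is exactly what lets the two one-sided reductions combine into the equivalence $\tau(H)=t \iff \alpha(G)=k$; taking the gadget parameters as multiples of the denominator $b$ should provide enough resolution to guarantee this strict gap.
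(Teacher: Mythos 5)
Your membership argument is correct, and in fact your direct witness (a cutset $S$ with $\omega(H-S)>1$ and $|S| \le t\,\omega(H-S)$ certifying $\tau(H)\le t$) is cleaner than the paper's detour through Corollary~\ref{toughness_gap}. The problem is the hardness half, which is a plan rather than a proof: the gadget is never specified, and the step you yourself call ``the bulk of the work'' --- rerouting an arbitrary cutset of $H$ to one arising from an independent set of $G$ without increasing the ratio $|S|/\omega(H-S)$ --- is exactly the content that is missing, and it is not routine. Worse, the one concrete design decision you do make (wherever the general construction uses a join or a clique, substitute a complete bipartite block) destroys the mechanism that makes the construction behind Theorem~\ref{exact_t_tough_dp_complete} analyzable: there, the clique $W$ is joined to all other vertices, so \emph{every} cutset must contain all of $W$, which anchors the lower bound on $|S|$. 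After bipartite substitution no such statement holds --- a cutset may take just one side of a complete bipartite block and leave the other side as a crowd of isolated vertices --- so when $\alpha(G)\le k$ you have no argument that every cutset has ratio at least $t$, and this is precisely the coNP-side inequality that DP-hardness requires.

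The paper avoids this obstacle by not reducing from {\scshape ExactIndependenceNumber} directly. It reduces {\scshape Exact-$t/2$-Tough} (already DP-complete on general graphs by Theorem~\ref{exact_t_tough_dp_complete}) to the bipartite problem, sending $G$ to the doubled graph $B(G)$ (two copies $v_{i,1},v_{i,2}$ of each vertex, edges $v_{i,1}v_{j,2}$ and $v_{i,2}v_{j,1}$ for each edge $v_iv_j$, plus the matching edges $v_{i,1}v_{i,2}$), and it isolates the entire difficulty in Claim~\ref{toughness_of_BG}: $\tau\big(B(G)\big) = \min\big(2\tau(G),1\big)$. The cutset analysis you deferred is carried out there, by splitting a cutset of $B(G)$ into its ``matched'' and ``unmatched'' parts, together with Proposition~\ref{spanning_subgraph_with_toughness_1/2} to handle the case $\tau(G)>1/2$. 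Note that composing the paper's two reductions produces $B(G_k)$, which is essentially the bipartite-block graph you envisioned, so your construction instinct is sound; but without an analogue of Claim~\ref{toughness_of_BG}, or an equivalent structural control of arbitrary cutsets, the equivalence $\tau(H)=t \iff \alpha(G)=k$ remains unproven, and your proposal does not contain the ideas needed to close that gap.
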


\begin{thm} \label{t_tough_bipartite_conp_complete}
 For any positive rational number $t \le 1$ the problem {\scshape $t$-Tough} remains coNP-complete for bipartite graphs.
\end{thm}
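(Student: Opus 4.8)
The plan is to prove membership in coNP first and then coNP-hardness, treating the cases $t=1$ and $t<1$ separately. Membership in coNP holds for every rational $t$, as already observed in the introduction: a cutset $S$ with $\omega(H-S) > |S|/t$ is a polynomial certificate that $\tau(H) < t$, and restricting attention to bipartite graphs does not affect this. For $t=1$ the coNP-hardness for bipartite graphs is precisely Theorem~\ref{1_tough_bipartite_conp_complete}, so the only genuine work is the case $t<1$.

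For $t<1$, write $t = a/b$ in lowest terms with $a < b$, and reduce from the coNP-complete problem of deciding whether $\alpha(G') \le k$ (the complement of {\scshape IndependentSet}), reusing the bipartite gadget constructed in the proof of Theorem~\ref{exact_t_tough_bipartite_dp_complete}. That construction turns an instance $(G',k)$ of {\scshape ExactIndependenceNumber} into a bipartite graph $H$ with $\tau(H)=t$ iff $\alpha(G')=k$; the point I would isolate is that it in fact realizes the stronger one-directional equivalence $\tau(H) \ge t \iff \alpha(G') \le k$. Since $H$ is bipartite and computable in polynomial time, and since deciding $\alpha(G') \le k$ is coNP-complete, this gives coNP-hardness of {\scshape $t$-Tough} for bipartite graphs, and together with membership in coNP the claimed coNP-completeness.

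The monotonicity makes this plausible: a large independent set in $G'$ yields a sparse cut leaving many components in $H$, pushing $\tau(H)$ down, so ``few-enough independent vertices'' ($\alpha(G') \le k$) should correspond exactly to ``toughness not too small'' ($\tau(H) \ge t$). The main obstacle is the toughness computation. I must show, on the one hand, that no cutset of the bipartite gadget beats the ratio $t$ when $\alpha(G')\le k$ (an upper bound on $\omega(H-S)/|S|$ over \emph{all} cuts $S$, not merely those coming from independent sets), and, on the other hand, that a maximum independent set of size $>k$ does force a cut of $H$ of ratio strictly below $t$ (a matching cut witnessing $\tau(H)<t$). Establishing these two bounds while keeping the gadget bipartite and its size polynomial in $a$ and $b$ is where essentially all the effort lies.

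As an alternative route, one could instead reduce directly from {\scshape $1$-Tough} for bipartite graphs (Theorem~\ref{1_tough_bipartite_conp_complete}): given a bipartite graph $G$, attach a bipartite toughness-scaling gadget assembled from complete bipartite blocks $K_{a,b}$ (each of which has toughness exactly $a/b = t$) so that $\tau(H) \ge t \iff \tau(G) \ge 1$. The difficulty then concentrates in precisely the same place, namely designing the gadget so that a bad cut of $G$ (of ratio $<1$) is \emph{amplified} to a cut of $H$ of ratio $<t$, while simultaneously every cut of $H$ has ratio $\ge t$ whenever $G$ is $1$-tough; I expect the first route, building on the already-established construction of Theorem~\ref{exact_t_tough_bipartite_dp_complete}, to be the shorter one.
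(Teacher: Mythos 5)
Your primary route is essentially the paper's own proof: the paper reduces {\scshape $t/2$-Tough} on general graphs to {\scshape $t$-Tough} on bipartite graphs via the doubling construction $B(\cdot)$, using precisely the monotone consequence of Claim~\ref{toughness_of_BG} that $\tau\big(B(G)\big) \ge t$ if and only if $\tau(G) \ge t/2$ for $t \le 1$, and your reduction from the complement of {\scshape IndependentSet} merely unrolls into this chain the paper's own alternative proof of Theorem~\ref{t_tough_conp_complete} (the remark after Theorem~\ref{exact_t_tough_dp_complete} that $\alpha(G) > k$ if and only if $\tau(G_k) < t/2$). The two cut-ratio bounds you flag as the ``main obstacle'' are exactly the case analysis already carried out in the proofs of Theorem~\ref{exact_t_tough_dp_complete} and Claim~\ref{toughness_of_BG}, so your argument is correct and coincides with the paper's.
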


Note that Theorem~\ref{t_tough_bipartite_conp_complete} contains Theorem~\ref{1_tough_bipartite_conp_complete} as a special case.

Our constructions used in the proofs of the above three theorems also provide alternative proofs for Theorems~\ref{t_tough_conp_complete} and~\ref{1_tough_bipartite_conp_complete}.
Furthermore, using the same construction as in the proof of Theorem~\ref{exact_t_tough_bipartite_dp_complete}, we also prove that {\scshape $t$-Tough} remains coNP-complete for $k$-connected bipartite graphs and so does {\scshape $1$-Tough} for $r$-regular bipartite graphs, where $t \le 1$ is an arbitrary rational number and $k \ge 2$ and $r \ge 6$ are integers. Determining the complexity of recognizing $k$-connected bipartite graphs and 1-tough 3-regular bipartite graphs was posed as an open problem in~\cite{recognize_toughness_cubic}. The latter problem remains open along with the problems of recognizing 1-tough 4-regular and 5-regular bipartite graphs.

\begin{thm} \label{t_tough_k_connected_bipartite_conp_complete}
 For any fixed integer $k \ge 2$ and positive rational number $t \le 1$ the problem {\scshape $t$-Tough} remains coNP-complete for $k$-connected bipartite graphs.
\end{thm}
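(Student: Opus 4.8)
The plan is to establish the two halves of coNP-completeness separately: membership in coNP on the restricted class, and coNP-hardness via a reduction that reuses the construction from the proof of Theorem~\ref{exact_t_tough_bipartite_dp_complete}. Membership is routine and essentially identical to the general case: for a $k$-connected bipartite graph $G$, a certificate that $\tau(G) < t$ is a cutset $S \subseteq V(G)$ with $t\,\omega(G-S) > |S|$, and since $|S|$ and $\omega(G-S)$ are computable in polynomial time (and bipartiteness and $k$-connectivity are themselves polynomially checkable), the problem lies in coNP on this graph class. So the whole content is the hardness direction.

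For hardness, let $\Phi$ denote the bipartite construction underlying Theorem~\ref{exact_t_tough_bipartite_dp_complete}, which turns an instance of a base (co)NP-complete problem into a bipartite graph whose toughness is governed by a distinguished family of cutsets, so that $\tau(\Phi(G)) \ge t$ holds precisely when $G$ is of the intended type. First I would isolate the free size parameters of $\Phi$ (the cardinalities of its gadget classes) and choose them large relative to the fixed constant $k$. The aim is to guarantee that $\delta(\Phi(G)) \ge k$ and, more strongly, that no vertex set of size less than $k$ separates the two color classes, because each vertex retains sufficiently many internally disjoint paths through the enlarged gadgets; this yields $\kappa(\Phi(G)) \ge k$. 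Note that this is consistent with the toughness bound: since $\tau \le 1$ and $k \ge 2$, the necessary inequality $\kappa \ge 2\tau$ is not violated, so there is no obstruction to having a $k$-connected bipartite host with toughness at most $t$. Bipartiteness is maintained throughout, as the connectivity padding only introduces edges and vertices respecting the existing two-coloring.

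It then remains to re-run the toughness analysis in the threshold (rather than exact-value) regime, showing $\tau(\Phi(G)) \ge t$ if and only if $G$ is the appropriate instance, which combined with the coNP-completeness guaranteed by the base problem of Theorem~\ref{t_tough_bipartite_conp_complete} completes the reduction. The heart of this step, and the main obstacle, is a \emph{stability} argument: enlarging the gadgets to force $k$-connectivity risks either creating new cheap cutsets that spuriously depress the toughness below $t$, or destroying the cutsets that witness the intended value. I would therefore prove that any separating set minimizing the ratio $|S|/\omega(\Phi(G)-S)$ can be taken either to avoid the padding entirely or to contain it wholesale, so that the extremal ratio is realized by the same combinatorial object as in the unpadded construction and is left unchanged by the augmentation. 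Carrying out this case analysis on optimal cutsets—while simultaneously preserving bipartiteness and raising $\kappa$ to the prescribed value $k$—is the delicate part; once it is in place, the equivalence $\tau(\Phi(G)) \ge t \iff \tau(G) \ge t$ follows and the coNP-hardness for $k$-connected bipartite graphs is established.
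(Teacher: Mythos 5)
Your coNP-membership argument is fine, but the hardness half has a genuine gap, and it begins with a misreading of the construction you propose to reuse. The reduction behind Theorem~\ref{exact_t_tough_bipartite_dp_complete} is the graph $B(G)$: each vertex $v_i$ of $G$ is replaced by exactly two vertices $v_{i,1},v_{i,2}$ joined by an edge, with $v_{i,1}v_{j,2}$ and $v_{i,2}v_{j,1}$ adjacent whenever $v_iv_j\in E(G)$. This construction has no ``gadget classes'' whose cardinalities can be chosen large relative to $k$; you appear to be conflating it with the graph $G_k$ of Section~\ref{section_general_graphs}, which does have size parameters $a,b,k$ but is far from bipartite (its set $W$ is a universal clique). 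Consequently the entire technical core of your plan --- the connectivity padding and the ``stability'' lemma asserting that ratio-minimizing cutsets either avoid the padding or contain it wholesale --- is never instantiated: no padding is defined and no such lemma is proved, and you yourself flag this as the main obstacle. Nor is it a routine verification: a $t/2$-tough graph may contain cut vertices (two triangles sharing a vertex is exactly $1/2$-tough), so $B(G)$ of a yes-instance can have connectivity as low as $2$; for $k\ge 3$ one would need genuine surgery on $B(G)$, and controlling its effect on toughness is exactly the difficulty your sketch defers. (A further slip: the target equivalence is $\tau\big(B(G)\big)\ge t \iff \tau(G)\ge t/2$, by Claim~\ref{toughness_of_BG}, not $\tau(G)\ge t$.)

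The paper resolves the difficulty in the opposite direction, and much more cheaply: it never modifies $B(\cdot)$ at all. It proves Proposition~\ref{connectivitynumber_of_BG}, namely $\kappa\big(B(G)\big)\ge\kappa(G)$ (a short two-case analysis), and then runs the reduction of Theorem~\ref{t_tough_bipartite_conp_complete} only on input graphs $G$ that are themselves $k$-connected; then $B(G)$ is automatically a $k$-connected bipartite graph, and Claim~\ref{toughness_of_BG} gives the toughness equivalence unchanged. The coNP-hardness of {\scshape $t/2$-Tough} on $k$-connected inputs, which this implicitly requires, comes for free from the construction of Section~\ref{section_general_graphs}: the graphs $G_{k'}$ built there contain the universal clique $W$ of size $ak'$, so every cutset contains $W$ and these graphs are at least $ak'$-connected, and the reduction from {\scshape IndependenceNumber} remains valid when its parameter $k'$ is restricted to be at least $\lceil k/a\rceil$ (the source problem stays NP-hard under such a restriction). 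In short, the connectivity requirement should be pushed back onto the source instances, where a universal clique supplies it, rather than forward into the bipartite gadget, where it would have to be engineered by hand; to salvage your route you would have to actually exhibit the padding and prove the stability claim, neither of which is present.
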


\begin{thm} \label{1_tough_at_least_6_regular_bipartite_conp_complete}
 For any fixed integer $r \ge 6$ the problem {\scshape $1$-Tough} remains coNP-complete for $r$-regular bipartite graphs.
\end{thm}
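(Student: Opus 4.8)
The plan is to reuse the bipartite graph produced by the construction in the proof of Theorem~\ref{exact_t_tough_bipartite_dp_complete} and turn it into an $r$-regular bipartite graph while preserving the answer; equivalently, this is a reduction from the already-established coNP-complete problem {\scshape $1$-Tough} for bipartite graphs (Theorem~\ref{t_tough_bipartite_conp_complete} with $t=1$) to its restriction to $r$-regular bipartite graphs. Membership in coNP is immediate and common to all these problems: a witness that an $r$-regular bipartite graph fails to be $1$-tough is a vertex set $S$ with $\omega(G-S) > |S|$, which is checkable in polynomial time. So the whole task is the hardness reduction, and within that the only genuinely new issue compared with Theorem~\ref{t_tough_bipartite_conp_complete} is making the instance $r$-regular without changing whether it is $1$-tough.

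First I would arrange that the base construction outputs a bipartite graph $G_0$ of maximum degree at most $r$ that is $1$-tough if and only if the input is a yes-instance; the essential point here is that the construction must be degree-bounded, since regularization can only add edges, never delete them. Then, for every vertex whose degree falls short of $r$, I would attach a bipartite ``filler'' gadget that raises the degree of that vertex (and of all internal gadget vertices) to exactly $r$. The gadget has to be designed so that (i) it keeps the graph bipartite and brings the two colour classes to equal size, a necessary condition since an $r$-regular bipartite graph has $|A| = |B|$; (ii) it is itself sufficiently well connected, in particular $1$-tough, so that cutting into a gadget is never an economical way to create components; and (iii) it does not interfere with the critical cutset coming from the core of $G_0$, so that a genuine toughness violation of $G_0$ survives in the regularized graph $G'$.

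The equivalence then splits into two directions. For the easy direction, a cutset $S_0$ of $G_0$ with $\omega(G_0 - S_0) > |S_0|$ is extended to a cutset of $G'$ by adding the few gadget vertices needed to keep the extra components separated, and one checks that the gain in components at least matches the added cutset vertices, so $G'$ is not $1$-tough either. The hard part will be the converse: showing that if $G'$ is not $1$-tough, then already $G_0$ is not $1$-tough. This requires a ``pushing'' lemma asserting that a most-violating cutset of $G'$ may be assumed to use no interior gadget vertices wastefully, intuitively because the gadgets are tough enough that slicing through one costs more vertices than the components it yields, so an optimal cutset confines itself to the core and to whole gadget attachment points and thereby reduces to a cutset of $G_0$. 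Carrying this out rigorously, together with the bookkeeping that forces the colour classes into balance and pins down why the gadget needs degree at least $6$, so that the smaller cases $r \in \{3,4,5\}$ are left open as noted, is where the main effort lies.
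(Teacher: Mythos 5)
Your proposal diverges from the paper's proof, and as outlined it has two genuine gaps. The paper never regularizes anything: it reduces {\scshape $1/2$-Tough} for $(r-1)$-regular (general) graphs --- proved coNP-complete in Theorem~\ref{1/2_tough_at_least_5_regular_conp_complete} via Lemmas~\ref{1/2_tough_odd_regular_conp_complete} and~\ref{1/2_tough_even_regular_conp_complete} --- to {\scshape $1$-Tough} for $r$-regular bipartite graphs, simply by applying the doubling construction $B(\cdot)$ of Section~\ref{section_bipartite_graphs}. When $G$ is $(r-1)$-regular, $B(G)$ is automatically $r$-regular and bipartite, and Claim~\ref{toughness_of_BG} gives $\tau\big(B(G)\big) = \min\big(2\tau(G),1\big)$, so $G$ is $1/2$-tough iff $B(G)$ is $1$-tough. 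All the gadget work in the paper happens one level down, at the $1/2$-toughness threshold, and this is not an accident (see below). This route also explains the threshold $r \ge 6$: one needs $r-1 \ge 5$, since {\scshape $1/2$-Tough} is polynomial for $3$- and $4$-regular graphs (Theorems~\ref{less_than_2/3_tough_3_regular_poly} and~\ref{less_than_2/3_tough_4_regular_poly}); your outline gives no reason for this threshold.

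The first gap in your plan is the starting point: the hard bipartite instances produced by Theorems~\ref{exact_t_tough_bipartite_dp_complete} and~\ref{t_tough_bipartite_conp_complete} are graphs $B(G_k)$ where $G_k$ contains a clique $W$ of size growing with the input, joined to all other vertices, so their maximum degree is unbounded. ``Arranging'' that the base construction has maximum degree at most $r$ is not a preprocessing step you can simply assume; within this paper, the only degree-bounded hard bipartite instances are exactly the regular ones this theorem is about, so your reduction is close to circular unless you rebuild a bounded-degree hard instance from scratch. The second and more fundamental gap is the filler gadget itself. A gadget whose interior is reachable from the core only through a single vertex $v$ (e.g.\ a gadget attached to fix a deficiency at $v$ alone) makes $v$ a cut-vertex, forcing $\tau(G') \le 1/2$ regardless of $G_0$ --- this is precisely why the paper's own $H_r$-gadget constructions in Lemmas~\ref{1/2_tough_odd_regular_conp_complete} and~\ref{1/2_tough_even_regular_conp_complete} can only target toughness $1/2$, never $1$. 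If instead one large gadget is attached to many deficient core vertices, it creates new paths between the components of $G_0 - S_0$, so a minimal witness with $\omega(G_0 - S_0) = |S_0| + 1$ need not survive: separating those components again costs extra gadget vertices, and each such vertex must be paid for by a new component, which conflicts with your requirement (ii) that the gadget be tough enough to resist cutting. Your ``pushing'' lemma would have to resolve exactly this tension, and nothing in the outline indicates how; the paper's proof succeeds because it never faces it.
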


In order to prove Theorem~\ref{1_tough_at_least_6_regular_bipartite_conp_complete}, we study the problem {\scshape $1/2$-Tough} in the class of $r$-regular graphs: we show that it is coNP-complete if $r \ge 5$ but is in P if $r \le 4$. (Note that the cases $r = 1$ and $r=2$ are trivial.)

\begin{thm} \label{1/2_tough_at_least_5_regular_conp_complete}
 For any fixed integer $r \ge 5$ the problem {\scshape $1/2$-Tough} remains coNP-complete for $r$-regular graphs.
\end{thm}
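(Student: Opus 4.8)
First, membership in coNP is immediate and independent of the degree restriction: if $\tau(G) < 1/2$ then there is a cutset $S$ with $\omega(G-S) > 2|S|$, and any such $S$ is a polynomially checkable certificate of a no-instance. It therefore suffices to prove coNP-hardness, i.e.\ that the complementary problem — deciding whether an $r$-regular graph admits a cutset $S$ with $\omega(G-S) > 2|S|$ — is NP-hard. The plan is to reduce from the complement of {\scshape $1/2$-Tough} on general graphs, which is NP-complete by Theorem~\ref{t_tough_conp_complete} with $t = 1/2$.

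Given an arbitrary graph $G$, I would build in polynomial time an $r$-regular graph $G'$ such that $G'$ has a cutset with more than twice as many components as its size if and only if $G$ does. The transformation proceeds in two toughness-neutral stages. First I would bound the maximum degree by $r$, replacing each high-degree vertex by a small, internally highly connected cluster whose external behaviour mimics the original vertex; this step must neither create nor destroy cutsets with surplus $\omega(G-S) - 2|S| > 0$. Second, to every vertex $v$ whose degree is still $d < r$ I would attach a filler gadget supplying the missing $r-d$ edges, where the gadget is itself $r$-regular except at its single attachment and so robustly connected internally that deleting any set of its vertices never yields more components than vertices deleted.

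The gadgets must meet two opposing demands, and reconciling them is the heart of the argument. On the one hand they must be \emph{rigid}: no gadget may split off as an extra component, and no few gadget vertices in a cutset may generate a surplus of components, since otherwise a $1/2$-tough $G$ could be turned into a non-$1/2$-tough $G'$. On the other hand the attachments must be \emph{transparent}, so that every genuine witness in $G$ survives in $G'$. The forward direction — a bad cut of $G$ induces one in $G'$ — is routine because the added gadgets are connected and hang off vertices of $G$. The hard part will be the converse: given a cutset $S'$ of $G'$ with $\omega(G'-S') > 2|S'|$, one must \emph{clean} it by pushing cut vertices out of the gadgets back onto $V(G)$ and absorbing gadget fragments into adjacent components, all without decreasing the surplus, thereby extracting a witness $S \subseteq V(G)$ for $G$. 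This exchange argument, together with verifying that $G'$ is exactly $r$-regular, is the technical core.

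Finally I would handle the whole range $r \ge 5$ uniformly, either by enlarging the gadget for larger $r$ or by adding a regular, toughness-neutral overlay that lifts every degree to $r$. The value $r = 5$ should enter precisely at the verification of gadget rigidity: a filler that is $r$-regular away from its attachment and simultaneously incapable of manufacturing surplus components first exists at degree $5$, which is exactly the threshold below which the problem becomes polynomial (the companion result for $r \le 4$). I would thus organise the construction so that the bound $r \ge 5$ is invoked at exactly the point where internal robustness of the gadget is proved, making transparent why the reduction — and hence the hardness — collapses for $r \le 4$.
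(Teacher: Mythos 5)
Your proposal is a plan, not a proof: neither of its two stages is actually constructed, and the parts you yourself call ``the technical core'' (the degree-reducing clusters and the cleaning/exchange argument) are precisely what is missing. The first stage is the fatal one. Replacing a high-degree vertex by a cluster that ``neither creates nor destroys cutsets with surplus $\omega(G-S)-2|S|>0$'' is not a routine gadget to build -- such substitutions generically change toughness, and no candidate is offered. The paper avoids this step entirely: it does not reduce from {\scshape $1/2$-Tough} on general graphs, but from {\scshape $1$-Tough} on $(r-1)$-regular graphs (odd $r$) resp.\ $(r-2)$-regular graphs (even $r\ge 6$), which is already coNP-complete by Theorem~\ref{1_tough_regular_conp_complete}. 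Then only degree \emph{padding} by one or two is needed: attach to every vertex $v_i$ a pendant copy of an explicit gadget $H_r$ that is $r$-regular except at its attachment point(s) and is Hamiltonian, hence $1$-tough. The analysis is then a few lines: a witness $S$ in $G$ gives $\omega(G'-S)=\omega(G-S)+|S|>2|S|$ (each gadget hanging off a deleted vertex becomes a new component -- note the built-in halving from $t=1$ to $t=1/2$, which your reduction, keeping $t=1/2$ on both sides, cannot exploit); conversely, if $G$ is $1$-tough, then writing $S_0=S\cap V(G)$ and $S_i$ for the intersection with the $i$-th gadget, Proposition~\ref{obs_below1_v2} applied to $G$ and to each $1$-tough gadget yields $\omega(G'-S)\le \omega(G-S_0)+|S_0|+\sum_i\omega(H^i_r-S_i)\le 2|S_0|+\sum_i|S_i|\le 2|S|$.

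Your explanation of where $r\ge 5$ enters is also substantively wrong. A ``rigid'' filler in your sense does exist at degree $4$: $K_5$ minus an edge is $4$-regular except for two vertices of degree $3$ and is Hamiltonian, hence $1$-tough. The real obstructions are different: (i) by the handshake lemma, a graph that is $r$-regular except for a \emph{single} vertex of degree $r-1$ exists only for odd $r$, which is why the paper needs two constructions (one attachment edge for odd $r$, two for even $r$, as in Lemmas~\ref{1/2_tough_odd_regular_conp_complete} and~\ref{1/2_tough_even_regular_conp_complete}); (ii) the source problem {\scshape $1$-Tough}-$k$-{\scshape Regular} is coNP-complete only for $k\ge 3$, forcing odd $r\ge 5$ and even $r\ge 6$; and (iii) for $r=4$ no reduction of any kind can succeed, because every connected $4$-regular graph is automatically $1/2$-tough (Claim~\ref{toughness_of_connected_4_regular_graphs}: each component of $G-S$ sends an even, hence at least $2$, number of edges to $S$, so $\omega(G-S)\le 2|S|$), i.e.\ the class has no no-instances beyond disconnected graphs. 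So the collapse below $5$ is a structural parity fact, not a failure of gadget engineering, and a correct proof must locate the threshold there.
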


\begin{thm} \label{less_than_2/3_tough_3_regular_poly}
 For any positive rational number $t < 2/3$ there is a polynomial time algorithm to recognize $t$-tough $3$-regular graphs.
\end{thm}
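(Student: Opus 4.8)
The plan is to exploit the fact that in a cubic graph the deletion of a single vertex creates at most three components, so that below the threshold $2/3$ the toughness can only take the values $1/3$ and $1/2$ and is always attained by deleting one vertex; this reduces the decision to trying every single-vertex deletion.

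First I would record two structural facts about a connected cubic graph $G$. For any cutset $S$ with components $C_1,\dots,C_k$, let $e_i$ be the number of edges joining $C_i$ to $S$; counting the edges leaving $S$ gives $\sum_i e_i\le 3|S|$, and since each component sends at least one edge to $S$ we get $k\le\sum_i e_i\le 3|S|$, so $\tau(G)\ge 1/3$. If $G$ is bridgeless then moreover $e_i\ge 2$ for every $i$, because a component joined to $S$ by a single edge would turn that edge into a bridge; hence $2k\le\sum_i e_i\le 3|S|$ and therefore $\tau(G)\ge 2/3$. Conversely, if $G$ has a bridge $uv$ then deleting $u$ separates the $v$-side from the rest, so $\omega(G-u)\ge 2$ and $\tau(G)\le 1/2$. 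Since the three edges at a cut vertex of a cubic graph split into at least two parts, one of which is a single edge, $G$ has a cut vertex if and only if it has a bridge. In particular no cubic graph has toughness in the open interval $(1/2,2/3)$, and for $1/2<t<2/3$ one answers \emph{yes} exactly when $G$ is bridgeless.

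The heart of the argument is the lemma that whenever $\tau(G)\le 1/2$ the minimum ratio is already attained by a single vertex, i.e. $\tau(G)=\min_v 1/\omega(G-v)$. Let $S$ be an optimal cutset with $|S|=s$ and $\omega(G-S)=k$, so $s/k\le 1/2$, i.e. $k\ge 2s$; the case $s=1$ is immediate, so assume $s\ge 2$. Call a component \emph{private} if all of its edges to $S$ land on a single vertex of $S$, and let $P$ be their number. Every non-private component meets at least two distinct vertices of $S$, hence sends at least two edges to $S$, so $\sum_i e_i\ge 2(k-P)+P=2k-P$, and combined with $\sum_i e_i\le 3s$ this gives $P\ge 2k-3s\ge k-s$, using $k\ge 2s$. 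Summing the number of private components over the vertices of $S$ counts each private component once, so some $v\in S$ is the attachment vertex of at least $P/s$ of them; deleting this $v$ isolates each of those private components and leaves $S\setminus\{v\}$ in a further component, whence $\omega(G-v)\ge P/s+1\ge k/s$ and $1/\omega(G-v)\le s/k=\tau(G)$. As single-vertex cuts cannot beat the optimum, equality holds.

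Combining these facts, the toughness of a connected cubic graph is $1/3$, $1/2$, or at least $2/3$, and the first two values are detected in polynomial time by computing $M=\max_v\omega(G-v)\in\{1,2,3\}$: if $M=1$ then $G$ is bridgeless and $\tau(G)\ge 2/3$, and otherwise $\tau(G)=1/M$ by the lemma. For any fixed $t<2/3$ the algorithm then answers \emph{yes} precisely when $1/M\ge t$, after first rejecting disconnected inputs (which are not $t$-tough for any $t>0$). I expect the only genuinely delicate step to be the single-vertex optimality lemma: the hard part is the counting showing that a cut of ratio at most $1/2$ must contain enough private components for one deletion vertex to reproduce the entire component count, and hence that no larger cutset can do strictly better than the best single vertex once the ratio has dropped to $1/2$.
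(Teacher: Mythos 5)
Your proof is correct and follows essentially the same route as the paper: both first show that a connected cubic graph with toughness below $2/3$ has a cut vertex (equivalently, a bridge), and then use an edge-counting plus pigeonhole argument on the singly-attached components of a tough set to show that the toughness, once at most $1/2$, is attained by deleting a single vertex, so the decision reduces to computing $\max_v \omega(G-v)$. Your ``private component'' lemma packages this a bit more cleanly than the paper (which counts components sending exactly one edge to $S$ and proves $\tau<1/2$ forces $\tau\le 1/3$), but the counting mechanism and the resulting algorithm are the same.
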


\begin{thm} \label{less_than_2/3_tough_4_regular_poly}
 There is a polynomial time algorithm to recognize $1/2$-tough 4-regular graphs.
\end{thm}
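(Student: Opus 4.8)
The plan is to reduce the recognition of $1/2$-tough $4$-regular graphs to a single connectivity test. I would prove the clean characterization that a $4$-regular graph $G$ is $1/2$-tough if and only if it is connected; since connectivity can be decided in linear time (e.g.\ by breadth-first search), the polynomial algorithm is then immediate: accept $G$ exactly when it is connected.

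The whole argument hinges on one parity observation, available precisely because the degree $4$ is even, and this is the step I would treat as the heart of the proof. Let $S \subseteq V(G)$ be a cutset and let $C$ be a component of $G - S$. Summing degrees over $V(C)$ gives $4\,|V(C)| = 2\,e(C) + e(S, C)$, where $e(C)$ counts the edges inside $C$ and $e(S, C)$ the edges between $C$ and $S$; hence $e(S, C)$ is even. When $G$ is connected and $C \ne G$, the component $C$ must send at least one edge to $S$, and therefore at least two.

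Granting this, the ``if'' direction is a one-line count. Writing $m = \omega(G - S)$ and summing $e(S, C) \ge 2$ over the components,
\[
2m \;\le\; \sum_{C} e(S, C) \;=\; e\bigl(S, V(G) \setminus S\bigr) \;\le\; 4\,|S| ,
\]
so $\omega(G - S) \le 2\,|S|$ for every cutset $S$, which is exactly $\tau(G) \ge \tfrac{1}{2}$ (the only complete $4$-regular graph, $K_5$, has $\tau = \infty$ and is covered as well). For the converse, a disconnected $G$ already fails at $S = \emptyset$: since $\omega(G) > 1$, we get $0 = |S| < \tfrac{1}{2}\,\omega(G)$, so $G$ is not $1/2$-tough.

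I do not expect a genuine obstacle here; the only real content is spotting the evenness of each cut $e(S, C)$ and noticing that it upgrades the trivial bound $e(S,C) \ge 1$ to $e(S,C) \ge 2$. It is worth remarking that this is precisely the phenomenon that singles out $r = 4$: the identical computation for $r$-regular graphs yields only $\omega(G - S) \le \tfrac{r}{2}\,|S|$, i.e.\ $\tau(G) \ge 2/r$, which meets the threshold $\tfrac{1}{2}$ only when $r = 4$. For $r = 6$ it gives merely $\tau(G) \ge \tfrac{1}{3}$, and for odd $r$ parity yields nothing beyond $e(S,C) \ge 1$, consistent with {\scshape $1/2$-Tough} becoming coNP-complete for all $r \ge 5$ by Theorem~\ref{1/2_tough_at_least_5_regular_conp_complete}.
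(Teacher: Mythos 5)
Your proposal is correct and follows essentially the same route as the paper: the paper's Claim~\ref{toughness_of_connected_4_regular_graphs} uses exactly your parity argument (each component of $G-S$ sends an even, hence at least two, number of edges to $S$, giving $\omega(G-S) \le 2|S|$) to show every connected 4-regular graph is $1/2$-tough, reducing recognition to a connectivity check. Your closing remark on why this is special to $r=4$ is a nice addition but not a difference in method.
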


Note that by Theorem~\ref{1_tough_regular_conp_complete}, recognizing 1-tough 3-regular graphs is coNP-complete. We remark that the toughness of a 3-regular graph (except for $K_4$) is at most $3/2$ and Jackson and Katerinis gave a characterization of cubic 3/2-tough graphs and these graphs can be recognized in polynomial time \cite{3/2_tough_characterization}. Their characterization uses the concept of inflation, which was introduced by Chvátal in \cite{toughness_intro}, but is not presented here.

\begin{thm}[\cite{3/2_tough_characterization}]
 A cubic graph $G$ is $3/2$-tough if and only if $G \simeq K_4$, $G \simeq K_2 \times K_3$, or $G$ is the inflation of a 3-connected cubic graph.
\end{thm}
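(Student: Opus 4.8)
The plan is to prove the two implications separately, using throughout the fact, recalled just before the theorem, that every cubic graph different from $K_4$ has toughness at most $3/2$; so for such graphs ``$3/2$-tough'' means $\tau(G)=3/2$ exactly. The first and cheapest observation is that a $3/2$-tough cubic graph $G\ne K_4$ must be $3$-connected: if $\kappa(G)=2$, a two-element cutset $S$ already yields $\omega(G-S)\ge 2$, hence $|S|/\omega(G-S)\le 1<3/2$, contradicting $3/2$-toughness. Thus I may assume $\kappa(G)=3$ in both directions. This is also why $K_4$ and the prism $K_2\times K_3$ appear separately: together with $K_{3,3}$ they are the only $3$-connected cubic graphs on at most six vertices, and $K_{3,3}$ is merely $1$-tough.

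For sufficiency I would isolate the following clean statement: if $G$ is a $3$-connected cubic graph whose vertex set partitions into triangles (equivalently, every vertex has exactly two neighbours inside its triangle and one outside), then $\tau(G)\ge 3/2$. This covers the prism and every inflation $\mathrm{Infl}(H)$ of a $3$-connected cubic graph $H$ at once, since in $\mathrm{Infl}(H)$ the images of the vertices of $H$ form exactly such a triangle partition; the $3$-connectivity of $\mathrm{Infl}(H)$ and of the prism is checked directly. To prove the clean statement, fix a cutset $S$ with $k:=\omega(G-S)\ge 2$ components $C_1,\dots,C_k$, and count incidences between components and vertices of $S$. On one hand, for each $j$ the set $N(C_j)\subseteq S$ separates $C_j$ from the remaining components, so $|N(C_j)|\ge\kappa(G)=3$; this gives at least $3k$ incidences. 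On the other hand, any $x\in S$ is incident to at most two components: its two triangle-neighbours are adjacent to each other, so whichever of them survive in $G-S$ lie in a single component, while its one external edge reaches at most one further component. Hence the number of incidences is at most $2|S|$, and $3k\le 2|S|$ yields $\omega(G-S)\le\tfrac23|S|$, i.e.\ $\tau(G)\ge 3/2$. Finally $K_4$ is complete and hence trivially $3/2$-tough.

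The necessity direction is where the real work lies. Assume $G$ is cubic, $3$-connected, $3/2$-tough, and distinct from $K_4$ and the prism; I must produce the triangle partition and show the contracted graph is a simple $3$-connected cubic graph $H$, so that $G=\mathrm{Infl}(H)$. The strategy is to take a maximum family $\mathcal T$ of vertex-disjoint triangles and analyse the leftover set $R$ of uncovered vertices, which spans a triangle-free subgraph. Toughness enters through the expansion inequality $|N(I)|\ge\tfrac32|I|$, valid for every independent set $I$ for which $N(I)$ is a genuine cutset: apply $3/2$-toughness to $S=N(I)$, noting the vertices of $I$ become isolated. The aim is to show $R=\emptyset$ and that the triangles of $\mathcal T$ pairwise share no vertex. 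If a vertex lay outside every triangle, or two triangles shared a vertex (producing an induced diamond), one should either select a small independent set whose neighbourhood is too small, violating the expansion inequality, or exhibit a piece of the graph attached by only two edges, contradicting $3$-edge-connectivity. Once $V(G)$ is partitioned into triangles, I contract each triangle to a point to obtain $H$; each triangle has exactly three external edges, and these reach three distinct triangles because two triangles joined by two edges would form a six-vertex piece meeting the rest of $G$ in only two edges, impossible in a $3$-edge-connected graph on more than six vertices. Hence $H$ is cubic and simple, and a cutset of $H$ of size at most two would lift to a cutset of $G$ of size at most two, so $H$ inherits $3$-connectivity; therefore $G=\mathrm{Infl}(H)$ with $H$ a $3$-connected cubic graph.

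I expect sufficiency to go through essentially as sketched, the only subtlety being the short verification that the prism and the inflations are genuinely $3$-connected. The main obstacle is the triangle-extraction step in the necessity direction: forcing \emph{every} vertex into a triangle and the triangles to be pairwise disjoint from the single quantitative hypothesis $\tau(G)=3/2$. The delicate point is that removing the neighbourhood of a single triangle-free vertex only produces a cut of ratio exactly $3/2$, which is not yet a contradiction; the argument must therefore combine several uncovered or overlapping vertices into one independent set and control its neighbourhood precisely, and this is the technically demanding part of the whole proof.
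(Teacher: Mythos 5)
First, a point of comparison: the paper does \emph{not} prove this theorem. It is quoted from Jackson and Katerinis \cite{3/2_tough_characterization}, and the surrounding text explicitly says their characterization ``is not presented here.'' So there is no in-paper proof to measure your attempt against; I can only assess it on its own terms. On those terms, your sufficiency half is essentially complete and correct: for a $3$-connected cubic graph whose vertex set is partitioned into triangles, the double count --- each component $C_j$ of $G-S$ satisfies $|N(C_j)| \ge \kappa(G) = 3$ since $N(C_j) \subseteq S$ is itself a cutset when $\omega(G-S) \ge 2$, while each $x \in S$ is adjacent to at most two components because its two triangle-neighbours, if they both survive, lie in a common component --- gives $3\,\omega(G-S) \le 2|S|$, i.e.\ $\tau(G) \ge 3/2$, and this uniformly covers the prism and every inflation of a $3$-connected cubic graph (the $3$-connectivity of inflations being the routine check you flag).

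The necessity half, however, contains a genuine gap, which you yourself concede. Showing that no two triangles share an edge is easy along your lines: a diamond in a cubic graph attaches to the rest by exactly two edges, contradicting $\kappa(G)=3$ once $G \ne K_4$, and in a cubic graph two distinct triangles cannot share just one vertex, so edge-disjointness already gives vertex-disjointness. But the core step --- that in a $3/2$-tough cubic graph other than $K_4$ and the prism \emph{every} vertex lies in a triangle --- is precisely where the substance of Jackson--Katerinis's proof resides, and your proposal does not supply it. As you note, if $v$ lies in no triangle then $N(v)$ is independent of size $3$ and $\omega\bigl(G - N(v)\bigr) \ge 2$, which yields ratio exactly $3/2$ and hence no contradiction; the hoped-for amplification (combining several uncovered vertices into one independent set $I$ with $|N(I)| < \tfrac{3}{2}|I|$) is named as a strategy but never carried out, and it is not routine --- nothing you wrote controls the overlaps among the neighbourhoods. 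So as it stands the proposal proves one direction only. A smaller slip worth fixing: your lift of a $2$-vertex-cut of $H$ to a $2$-vertex-cut of $G$ fails as stated, since deleting two triangles deletes six vertices of $G$ and yields ratio $6/2 = 3$, contradicting nothing; argue instead through edge cuts, using $\kappa = \lambda$ for cubic graphs: an edge cut of $H$ of size at most $2$ lifts verbatim to an edge cut of $G$ of size at most $2$, whence $\kappa(G) \le \lambda(G) \le 2$, the desired contradiction.
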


This paper is structured as follows. After proving some useful lemmas in Section~\ref{section_preliminaries}, we prove Theorem~\ref{exact_t_tough_dp_complete} in Section~\ref{section_general_graphs}. In Section~\ref{section_bipartite_graphs} we prove two theorems about bipartite graphs, Theorems~\ref{exact_t_tough_bipartite_dp_complete} and \ref{t_tough_k_connected_bipartite_conp_complete}. Section~\ref{section_regular_graphs} is about regular graphs, where we prove Theorems~\ref{1_tough_at_least_6_regular_bipartite_conp_complete}--\ref{less_than_2/3_tough_4_regular_poly}.

\section{Preliminaries} \label{section_preliminaries}

In this section we prove some useful lemmas.

\begin{prop} \label{spanning_subgraph_with_toughness_1/2}
 Let $G \ncong K_1, K_2$ be a $1/2$-tough graph. Then there exists a spanning subgraph $H$ of $G$ for which $\tau(H) = 1/2$.
\end{prop}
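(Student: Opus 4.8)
The plan is to produce $H$ by greedily deleting edges from $G$ while preserving $1/2$-toughness, and then to argue that any edge-minimal such subgraph already has toughness exactly $1/2$. Two elementary facts drive everything. First, deleting edges cannot increase toughness: if $H'$ is a spanning subgraph of $H$, then $\omega(H'-S) \ge \omega(H-S)$ for every $S$, so $\tau(H') \le \tau(H)$. Second, deleting a single edge $e$ increases the number of components left by any fixed cutset by at most one, i.e. $\omega((H-e)-S) \le \omega(H-S)+1$, since $e$ lies in at most one component of $H-S$ and can split it into at most two.

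I would then consider the family of connected spanning subgraphs $H$ of $G$ with $\tau(H) \ge 1/2$; this is nonempty because it contains $G$, so I may pick an $H$ in it that is minimal with respect to its edge set, and claim $\tau(H)=1/2$. Two preliminary observations pin down the degenerate cases: since $G$ is $1/2$-tough it must be connected (otherwise $S=\emptyset$ is a cutset of ratio $0$), and $G \ncong K_1,K_2$ together with connectivity forces $|V(G)|=|V(H)| \ge 3$.

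The heart of the argument is to assume $\tau(H)>1/2$ and derive a contradiction. This hypothesis is equivalent to $2|S| \ge \omega(H-S)+1$ for every $S$ with $\omega(H-S)>1$. If $H$ contains a cycle, choose a non-bridge edge $e$; then $H-e$ stays connected, and combining the two facts above shows $H-e$ is still $1/2$-tough, contradicting minimality. The only delicate subcase is when $H-S$ is connected but $(H-e)-S$ splits into two components: there $S \ne \emptyset$ (because $H-e$ is connected, so $S=\emptyset$ cannot disconnect it), hence $2|S| \ge 2 = \omega((H-e)-S)$. If instead $H$ is a tree, then since $|V(H)| \ge 3$ it has a vertex $v$ with $d(v) \ge 2$, and removing $v$ leaves exactly $d(v)$ components, so $\tau(H) \le 1/d(v) \le 1/2$, contradicting $\tau(H)>1/2$.

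The main obstacle is precisely the interaction between edge deletion and connectivity: a naive ``delete any edge'' argument fails, because deleting a bridge collapses the toughness to $0$ and takes us out of the admissible family. Separating the cyclic case (where a non-bridge edge can always be deleted safely) from the tree case (where the single internal-vertex cut already certifies $\tau \le 1/2$) is the crux; once the split is made, the remaining component-counting inequalities are routine.
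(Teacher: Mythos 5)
Your proposal is correct and takes essentially the same route as the paper: both arguments greedily delete edges to reach a (locally) edge-minimal spanning subgraph with toughness at least $1/2$, both rest on the observation that deleting a single edge increases $\omega(\,\cdot\,-S)$ by at most one, and both dispose of the all-bridges situation via a cut-vertex. The only difference is organizational --- you argue by contradiction from global edge-minimality with a tree-versus-cycle split, while the paper analyzes the single blocking edge $e$ (bridge or not) directly --- but the underlying computations are identical.
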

\begin{proof}
 Let $H$ be a spanning subgraph of $G$ so that $\tau(H) \ge 1/2$ and there exists an edge $e \in E(H)$ for which $\tau(H-e) < 1/2$. (Note that since $\tau(G) \ge 1/2$, such a spanning subgraph $H$ can be obtained by repeatedly deleting some edges of $G$.) Now we show that $\tau(H) \le 1/2$, which implies that $\tau(H) = 1/2$. Let $e \in E(G)$ be an edge for which $\tau(H-e) < 1/2$.
 
 \medskip
 
 \textit{Case 1:} $e$ is a bridge in $H$.
 
 Since $G$ is $1/2$-tough, it is connected. Since $G \ncong K_1, K_2$ and $G$ is connected, the graphs $G$ and $H$ have at least three vertices. Hence, at least one of the endpoint of $e$ is a cut-vertex in $H$, so $\tau(H) \le 1/2$.
 
 \medskip
 
 \textit{Case 2:} $e$ is not a bridge in $H$.
 
 Then there exists a cutset $S$ in $H-e$ for which 
  \[ \omega \big( (H-e) - S \big) > 2|S| \text{.} \]
 
 \medskip
 
 \textit{Case 2.1:} ($e$ is not a bridge in $H$) and $S$ is a cutset in $H$.
 
 Then
 \[ \omega(H-S) \le 2|S| \text{,} \]
 which is only possible if
 \[ \omega(H-S) = 2|S| \qquad \text{and} \qquad \omega \big( (H-e) - S \big) = 2|S| + 1 \text{.} \]
 Therefore, $\tau(H) \le 1/2$.
 
 \medskip
 
 \textit{Case 2.2:} ($e$ is not a bridge in $H$) and $S$ is not a cutset in $H$.
 
 This is only possible if
  \[ \omega \big( (H-e) - S \big) = 2 \text{.} \]
 Hence
  \[ 2 = \omega \big( (H-e) - S \big) > 2|S| \text{,} \]
 i.e. $|S| < 1$, which means that $S=\emptyset$, so $e$ is a bridge $H$, which is a contradiction.
\end{proof}

\begin{prop} \label{obs_below1_v2}
 Let $t \le 1$ be a positive rational number and $G$ a $t$-tough graph. Then 
 \[ \omega(G-S) \le |S|/t \]
 for any proper subset $S$ of $V(G)$.
\end{prop}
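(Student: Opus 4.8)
The plan is to reduce everything to the defining inequality of $t$-toughness by a short case analysis on the value of $\omega(G-S)$. Since $S$ is a \emph{proper} subset of $V(G)$, the graph $G-S$ contains at least one vertex, so $\omega(G-S) \ge 1$; hence only two cases can occur: either $\omega(G-S) \ge 2$, i.e.\ $S$ disconnects $G$, or $\omega(G-S) = 1$.

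In the first case $S$ is a genuine cutset, so the definition of $t$-toughness applies directly and gives $|S| \ge t\,\omega(G-S)$. Dividing by $t > 0$ yields exactly the claimed bound $\omega(G-S) \le |S|/t$, and nothing further is required. The whole content of the statement therefore lies in the second case, where $G-S$ is connected and the toughness definition says nothing, since $S$ is not a cutset. Here the desired inequality reduces to $1 \le |S|/t$, i.e.\ $|S| \ge t$. For a nonempty $S$ this is immediate from $|S| \ge 1$ together with the hypothesis $t \le 1$, which gives $|S| \ge 1 \ge t$. This is precisely the point at which the assumption $t \le 1$ is indispensable: for $t > 1$ the removal of a single non-cut vertex would already violate the bound, so the proposition would be false.

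I expect no genuine obstacle, as the argument is elementary once the case split is made; the only point deserving care is the degenerate subcase $S = \emptyset$. Taken literally, the bound fails there, because a $t$-tough graph with $t>0$ is connected, so $\omega(G-\emptyset)=1$ while $|S|/t = 0$. I would therefore read the statement as applying to nonempty proper subsets (the empty set being vacuous for the intended downstream use), and note explicitly in the write-up that the hypotheses $S \ne V(G)$ and $S \ne \emptyset$ together guarantee $1 \le \omega(G-S)$ and $|S| \ge 1$, which are exactly the two facts the argument consumes.
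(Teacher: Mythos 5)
Your proof is correct and takes essentially the same route as the paper's: a case split on whether $S$ is a cutset, with the toughness definition handling the first case and the chain $|S| \ge 1 \ge t$ handling the case $\omega(G-S)=1$. Your remark about the degenerate case $S=\emptyset$ is well taken --- the paper's own proof also silently assumes $S \ne \emptyset$ (it invokes exactly that to conclude $|S|/t \ge 1$), so the statement is indeed to be read as applying to nonempty proper subsets.
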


\begin{proof}
 If $S$ is a cutset in $G$, then by the definition of toughness $\omega(G-S) \le |S|/t$ holds.

 If $S$ is not a cutset in $G$, then $\omega(G-S) = 1$ since $S \ne V(G)$. On the other hand, $|S|/t \ge 1$ since $S \ne \emptyset$ and $t \le 1$. Therefore, $\omega(G-S) \le |S|/t$ holds in this case as well.
\end{proof}

As is clear from its proof, the above proposition holds even if $S$ is not a cutset. However, it does not hold if $t > 1$ and $S$ is not a cutset: if $t>1$, then the graph cannot contain a cut-vertex; therefore $\omega(G-S) = 1$ for any subset $S$ with $|S| = 1$, while $|S|/t = 1/t < 1$.

\begin{prop} \label{possible_values_of_toughness}
 Let $G$ be a connected noncomplete graph on $n$ vertices. Then $\tau(G)$ is a positive rational number, and if $\tau(G) = a/b$, where $a,b$ are relatively prime positive integers, then $1 \le a,b \le n-1$.
\end{prop}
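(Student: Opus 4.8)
The plan is to exploit the fact that toughness is realized as a minimum of ratios of integers taken over the finite, nonempty collection of cutsets of $G$, and then to bound the numerator and denominator of the reduced fraction by a simple vertex-counting argument.

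First I would record that, since $G$ is connected and noncomplete, it admits at least one cutset: picking two nonadjacent vertices $u,v$ and setting $S_0 = V(G)\setminus\{u,v\}$ yields a set with $\omega(G-S_0) \ge 2$, and $S_0 \ne \emptyset$ because $G \not\cong K_2$. Unwinding the definition of toughness, $G$ is $t$-tough exactly when $t \le |S|/\omega(G-S)$ for every cutset $S$, so
\[
 \tau(G) = \min_{S}\frac{|S|}{\omega(G-S)},
\]
the minimum ranging over all $S \subseteq V(G)$ with $\omega(G-S) > 1$. As $V(G)$ is finite there are only finitely many such $S$, so the minimum is attained, say at a tough set $S^\ast$; writing $s = |S^\ast|$ and $c = \omega(G-S^\ast)$ we have $\tau(G) = s/c$ with $s,c$ positive integers, which already gives that $\tau(G)$ is a positive rational number.

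Next I would establish the bounds. The vertices outside $S^\ast$ split into $c$ nonempty components, so $c \le n - s$, i.e.
\[
 s + c \le n,
\]
and moreover $c \ge 2$ since $S^\ast$ is a cutset, while $s \ge 1$ since $S^\ast \ne \emptyset$. Now I would reduce the fraction: writing $g = \gcd(s,c)$ and $\tau(G) = a/b$ in lowest terms gives $a = s/g$ and $b = c/g$, whence $a \le s$ and $b \le c$. Combining these with $s + c \le n$ together with $c \ge 2$ and $s \ge 1$ yields $a \le s \le n - c \le n-2 \le n-1$ and $b \le c \le n - s \le n-1$, while $a,b \ge 1$ is immediate. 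This completes the argument.

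There is no single hard obstacle here; the only points requiring care are verifying that a cutset exists, so that the minimum defining $\tau(G)$ is taken over a nonempty set and is genuinely attained, and keeping track of the inequalities after passing to lowest terms, since the reduction by $g$ can only decrease the numerator and denominator and hence never pushes $a$ or $b$ past $s$ and $c$ respectively.
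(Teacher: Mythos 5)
Your proposal is correct and follows essentially the same route as the paper: express $\tau(G)$ as the minimum of $|S|/\omega(G-S)$ over the finite nonempty family of cutsets, bound $1 \le |S| \le n-2$ and $2 \le \omega(G-S) \le n-1$, and note that passing to lowest terms can only decrease numerator and denominator. Your version merely spells out two details the paper leaves implicit (the existence of a cutset via two nonadjacent vertices, and the reduction by $\gcd$), which is fine but not a different argument.
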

\begin{proof}
 By definition, 
  \[ \tau(G) = \min_{\substack{S \subseteq V(G) \\ \omega(G-S) \ge 2}} \frac{|S|}{\omega(G-S)} \]
 for a noncomplete graph $G$. Since $G$ is connected and noncomplete, $1 \le |S| \le n-2$ for every $S \subseteq V(G)$ with $\omega(G-S) \ge 2$. Obviously, $\omega(G-S) \ge 2$ and since $G$ is connected, $\omega(G-S) \le n-1$.
\end{proof}

The following is a trivial consequence of Proposition~\ref{possible_values_of_toughness}.

\begin{corollary} \label{toughness_gap}
 Let $G$ and $H$ be two connected noncomplete graphs on $n$ vertices. If $\tau(G) \ne \tau(H)$, then
 \[ \big| \tau(G) - \tau(H) \big| > \frac{1}{n^2} \text{.} \]
\end{corollary}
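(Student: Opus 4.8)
The plan is to reduce everything to the bound on numerator and denominator supplied by Proposition~\ref{possible_values_of_toughness}. Since $G$ and $H$ are both connected and noncomplete, that proposition applies to each of them, so I can write $\tau(G) = a/b$ and $\tau(H) = c/d$, where $a,b,c,d$ are positive integers with $1 \le a,b,c,d \le n-1$. (I do not even need the fractions to be in lowest terms for the argument; only the bounds on the denominators will matter.)

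Next I would bring the two fractions over a common denominator and compute
\[
  \big| \tau(G) - \tau(H) \big| = \left| \frac{a}{b} - \frac{c}{d} \right| = \frac{|ad - bc|}{bd} \text{.}
\]
The numerator $ad - bc$ is an integer, and since $\tau(G) \ne \tau(H)$ it is nonzero, so $|ad - bc| \ge 1$. The denominator satisfies $bd \le (n-1)^2 < n^2$.

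Combining these two observations yields
\[
  \big| \tau(G) - \tau(H) \big| = \frac{|ad - bc|}{bd} \ge \frac{1}{bd} \ge \frac{1}{(n-1)^2} > \frac{1}{n^2} \text{,}
\]
which is exactly the claimed inequality.

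There is no genuine obstacle here: the statement is just the quantitative version of the fact that two distinct rationals whose denominators are bounded by $n-1$ cannot be arbitrarily close. The only points that deserve a moment's care are invoking the noncompleteness of \emph{both} graphs, so that each toughness is finite and Proposition~\ref{possible_values_of_toughness} genuinely applies (rather than one toughness being $\infty$, as it would be for a complete graph), and recording the clean estimate $bd \le (n-1)^2 < n^2$ for the common denominator.
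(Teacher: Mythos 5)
Your proof is correct and is exactly the argument the paper has in mind: the paper states this corollary as a ``trivial consequence'' of Proposition~\ref{possible_values_of_toughness} without writing out details, and your cross-multiplication bound $\big| \tau(G)-\tau(H) \big| = |ad-bc|/(bd) \ge 1/(n-1)^2 > 1/n^2$ is the intended filling-in of that gap. Your side remarks (that lowest terms are not needed, only the denominator bounds, and that noncompleteness of both graphs is what guarantees finite rational toughness) are also accurate.
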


\begin{claim} \label{exact_t_tough_DP}
 For any positive rational number $t$ the problem {\scshape Exact-$t$-Tough} belongs to DP.
\end{claim}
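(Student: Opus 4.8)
The plan is to turn the equality $\tau(G)=t$ into a conjunction of two inequalities and to place each inequality in the appropriate complexity class. Concretely, I would write
\[ \{G : \tau(G) = t\} = \{G : \tau(G) \ge t\} \cap \{G : \tau(G) \le t\} \]
and exhibit the first set as a coNP language $L_2$ and the second as an NP language $L_1$; the definition of DP then gives the claim immediately.

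For $L_2 = \{G : \tau(G) \ge t\}$ there is nothing new to do: this is exactly the language of the problem {\scshape $t$-Tough}, which the excerpt already observed lies in coNP, a refuting witness being a vertex set $S$ with $\omega(G-S) > |S|/t$. So $L_2 \in \text{coNP}$. For $L_1 = \{G : \tau(G) \le t\}$ I would supply a short positive certificate: a single vertex set $S \subseteq V(G)$ with $\omega(G-S) \ge 2$ and $|S| \le t\,\omega(G-S)$. Such a certificate has size polynomial in $|V(G)|$ and can be checked in polynomial time by counting the components of $G-S$ and testing the numerical inequality (recall $t$ is a fixed rational). Correctness of the certificate follows from the min-formula
\[ \tau(G) = \min_{\substack{S \subseteq V(G) \\ \omega(G-S) \ge 2}} \frac{|S|}{\omega(G-S)} \]
used in Proposition~\ref{possible_values_of_toughness}: any such $S$ witnesses $\tau(G) \le |S|/\omega(G-S) \le t$, and conversely, if $\tau(G) \le t$ for a noncomplete connected $G$, then a minimizing cutset is exactly a certificate of the required form. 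Hence $L_1 \in \text{NP}$, and $L = L_1 \cap L_2 \in \text{DP}$ as desired.

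I do not expect a genuine obstacle here; the argument is essentially the standard ``upper bound plus lower bound'' decomposition of an exact-value problem. The only points requiring a little care are the degenerate cases, and I would handle them explicitly. If $G$ is complete then $\tau(G) = \infty > t$, so $G \notin L_1$, and consistently $G$ admits no cutset $S$ with $\omega(G-S) \ge 2$, hence no NP-certificate exists. If $G$ is disconnected then $\tau(G)=0 \ne t$ (as $t>0$), and the empty set already serves as a certificate for $L_1$ while no refutation of membership in $L_2$ is needed; either way $G$ is correctly excluded from $L$. Making sure the witness for $L_1$ is polynomial-size and polynomial-time verifiable, and that these boundary cases fall out correctly, is the whole of the work.
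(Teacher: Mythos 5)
Your proof is correct, and it uses the same top-level decomposition as the paper, namely writing $\text{{\scshape Exact-$t$-Tough}} = L_1 \cap L_2$ with $L_2 = \{G : \tau(G) \ge t\} \in \text{coNP}$ and $L_1 = \{G : \tau(G) \le t\} \in \text{NP}$; the coNP part is handled identically. Where you genuinely diverge is in the proof that $L_1 \in \text{NP}$, which is the only nontrivial step. The paper does \emph{not} exhibit a direct certificate; instead it invokes Corollary~\ref{toughness_gap} (toughness values of $n$-vertex graphs that differ must differ by more than $1/n^2$) to rewrite the non-strict inequality as a strict one, $\tau(G) \le t \iff \tau(G) < t + 1/|V(G)|^2$, and then observes that the language $\{G : \tau(G) < t + 1/|V(G)|^2\}$ is the complement of a coNP language of the {\scshape $t'$-Tough} type. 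You instead give a direct positive witness for $\tau(G) \le t$: a set $S$ with $\omega(G-S) \ge 2$ and $|S| \le t\,\omega(G-S)$, whose soundness and completeness follow from the minimum formula for $\tau$ (as in Proposition~\ref{possible_values_of_toughness}), with the complete and disconnected graphs checked separately. Your route is more elementary and self-contained --- it makes the gap corollary unnecessary for this claim, since the certificate handles the non-strict inequality directly rather than converting it to a strict one; the paper's route, by contrast, reuses the known coNP-membership of {\scshape $t$-Tough} as a black box and isolates the arithmetic subtlety into a quantitative gap statement of independent interest. Your treatment of the degenerate cases (complete graphs have no admissible $S$, disconnected graphs are certified by $S = \emptyset$) is exactly right and is the only place where care is needed.
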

\begin{proof}
 For any positive rational number $t$,
 \begin{gather*}
  \text{{\scshape Exact-$t$-Tough}} = \{ G \text{ graph} \mid \tau(G) = t \} \\
  = \{ G \text{ graph} \mid \tau(G) \ge t \} \cap \{ G \text{ graph} \mid \tau(G) \le t \} \text{.}
 \end{gather*}
 Let
  \[ L_1 = \{ G \text{ graph} \mid \tau(G) \le t \} \]
 and
  \[ L_2 = \{ G \text{ graph} \mid \tau(G) \ge t \} \text{.} \]
 Notice that $L_2 = \text{{\scshape $t$-Tough}}$ and it is known to be in \text{coNP}: a witness is a vertex set $S \subseteq V(G)$ whose removal disconnects $G$ and leaves more than $|S|/t$ components.
 
 Now we show that $L_1 \in \text{NP}$, i.e$.$ we can express $L_1$ in the form
 \[ L_1 = \{ G \text{ graph} \mid \tau(G) < t + \varepsilon \} \text{,} \]
 which is the complement of a language belonging to coNP. Let $G$ be an arbitrary graph on $n$ vertices. If $G$ is disconnected, then $\tau(G) = 0$, and if $G$ is complete, then $\tau(G) = \infty$, so in both cases $\tau(G) \le t$ if and only if $\tau(G) < t + \varepsilon$ for any positive $\varepsilon$. If $G$ is connected and noncomplete, then from Corollary~\ref{toughness_gap} it follows that $\tau(G) \le t$ if and only if $\tau(G) < t + 1/n^2$. Therefore,
 \[ L_1 = \{ G \text{ graph} \mid \tau(G) \le t \} = \left\{ G \text{ graph} ~ \middle| ~ \tau(G) < t + \frac{1}{|V(G)|^2} \right\} \text{,} \]
 so $L_1 \in \text{NP}$. Hence, we can conclude that $\text{{\scshape Exact-$t$-Tough}} = L_1 \cap L_2 \in \text{DP}$.
\end{proof}

For any positive rational number $t$ let {\scshape Exact-$t$-Tough-Bipartite} denote the problem of determining whether a given bipartite graph has toughness $t$. Since the toughness of a bipartite graph is at most 1 (except for the graphs $K_1$ and $K_2$), we can conclude the following.

\begin{corollary} \label{exact_t_tough_bipartite_DP}
 For any positive rational number $t \le 1$ the problem {\scshape Exact-$t$-Tough-Bipartite} belongs to DP. Moreover, {\scshape Exact-$1$-Tough-Bipartite} belongs to coNP.
\end{corollary}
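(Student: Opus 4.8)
The plan is to reduce both assertions to facts already in hand. For the membership in DP, I would view the problem, as usual, as a language over all graphs and observe that
\[ \text{{\scshape Exact-$t$-Tough-Bipartite}} = \text{{\scshape Exact-$t$-Tough}} \cap \{ G \text{ graph} \mid G \text{ is bipartite} \} \text{.} \]
By Claim~\ref{exact_t_tough_DP} we may write $\text{{\scshape Exact-$t$-Tough}} = L_1 \cap L_2$ with $L_1 \in \text{NP}$ and $L_2 \in \text{coNP}$. Since bipartiteness is decidable in polynomial time, the set $B = \{ G \mid G \text{ is bipartite} \}$ lies in P, and because NP is closed under intersection with P-languages we have $L_1 \cap B \in \text{NP}$. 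Therefore
\[ \text{{\scshape Exact-$t$-Tough-Bipartite}} = (L_1 \cap B) \cap L_2 \]
displays the problem as the intersection of an NP language with a coNP language, so it belongs to DP.

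For the \emph{moreover} part I would exploit the stated fact that every bipartite graph other than $K_1$ and $K_2$ has toughness at most $1$. Consequently, for a bipartite graph $G \ncong K_1, K_2$ one has $\tau(G) = 1$ if and only if $\tau(G) \ge 1$, since the reverse inequality $\tau(G) \le 1$ then holds automatically. The two exceptional graphs are complete, hence have $\tau = \infty \ne 1$, and can be recognized in polynomial time. Thus, again as a language over all graphs,
\[ \text{{\scshape Exact-$1$-Tough-Bipartite}} = \{ G \mid G \text{ bipartite}, \ G \ncong K_1, K_2, \ \tau(G) \ge 1 \} \text{.} \]

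The first two conditions in this description are decidable in polynomial time, while the condition $\tau(G) \ge 1$ is precisely the problem {\scshape $1$-Tough}, which is known to lie in coNP (a disconnecting set $S$ with $\omega(G-S) > |S|$ witnesses membership in the complement). Since coNP is closed under intersection with languages in P, the whole problem belongs to coNP. I do not anticipate a genuine obstacle: the corollary is a bookkeeping consequence of Claim~\ref{exact_t_tough_DP} together with the toughness bound for bipartite graphs. The only point requiring care is the explicit exclusion of $K_1$ and $K_2$, whose toughness is $\infty$ rather than at most $1$; removing them is exactly what validates the equivalence $\tau(G) = 1 \iff \tau(G) \ge 1$ and lets us discard the NP (upper-bound) half of the DP intersection, retaining only the coNP half.
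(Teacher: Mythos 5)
Your proposal is correct and follows essentially the same route as the paper: the paper derives the corollary from Claim~\ref{exact_t_tough_DP} together with the observation that every bipartite graph other than $K_1$ and $K_2$ has toughness at most $1$, which is precisely your collapse of $\tau(G)=1$ to $\tau(G)\ge 1$. Your explicit handling of the technicalities (bipartiteness being in P, excluding $K_1$ and $K_2$, and the closure of NP and coNP under intersection with P-languages) just spells out what the paper leaves implicit.
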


\section{The complexity of determining the toughness of general graphs, proof of Theorem~\ref{exact_t_tough_dp_complete}} \label{section_general_graphs}

\textbf{Proof of Theorem~\ref{exact_t_tough_dp_complete}.} In Claim \ref{exact_t_tough_DP} we already proved that $\text{{\scshape Exact-$t$-Tough}} \in DP$. To prove {\scshape Exact-$t$-Tough} is DP-hard, we reduce {\scshape ExactIndependenceNumber} (which is DP-complete by Corollary~\ref{exactindepnumber_dp_complete}) to it.

Let $G$ be an arbitrary connected graph on the vertices $v_1, \ldots, v_n$ and let $a,b$ be positive integers such that $t=a/b$. Let $k$ be a positive integer and let $G_k$ be the following graph. For all $i \in [n]$ let
 \[ V_i = \{ v_{i,1}, v_{i,2}, \ldots, v_{i,a} \} \text{,} \]
and let
 \begin{gather*}
  V = \bigcup_{i=1}^n V_i \text{,} \qquad
  U = \bigcup_{i=1}^n \bigcup _{j=1}^{b} u_{i,j} \text{,} \qquad U' = \{ u'_1, \ldots, u'_{(b-1)k} \} \text{,} \qquad W = \{ w_1, \ldots, w_{ak} \} \text{,} \\
  V \left( G_k \right) = V \cup U \cup U' \cup W \text{.}
 \end{gather*}
For all $i \in [n]$ place a clique on $V_i$. For all $i_1,i_2 \in [n]$ if $v_{i_1} v_{i_2} \in E(G)$, then place a complete bipartite graph on $(V_{i_1}; V_{i_2})$. For all $i \in [n]$ and $j \in [b]$ connect $u_{i,j}$ to every vertex of $V_i$. Place a clique on $W$ and connect every vertex of $W$ to every vertex of $V \cup U \cup U'$. See Figure~\ref{figure_exact_t_tough_dp_complete}.

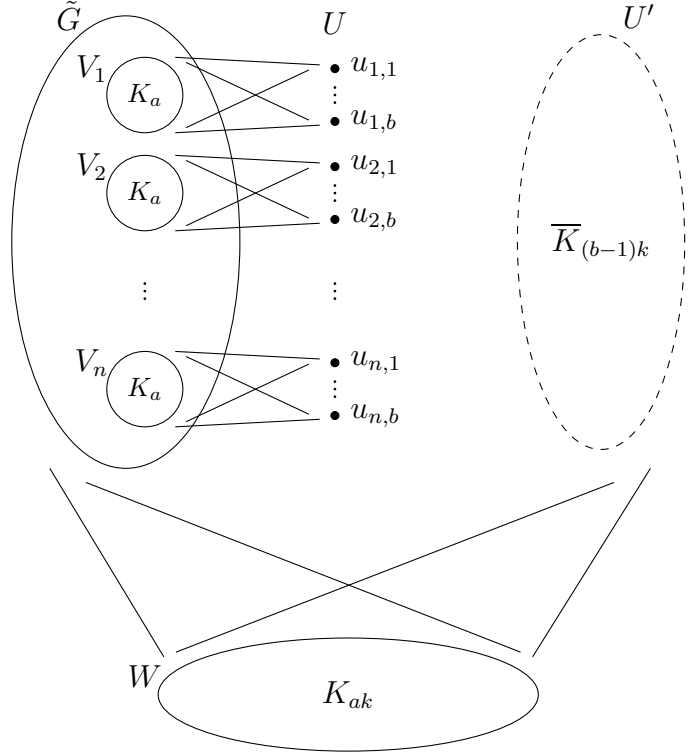
\begin{figure}[H]
 \begin{center}
 \begin{tikzpicture}
  \tikzstyle{vertex}=[draw,circle,fill=black,minimum size=3,inner sep=0]
  
  \draw (-0.25,0) ellipse (1.5 and 3);
  \draw (0,1.95) ellipse (0.5 and 0.5);
  \draw (0,0.65) ellipse (0.5 and 0.5);
  \draw[fill] (0,-0.55) circle (0.3pt);
  \draw[fill] (0,-0.65) circle (0.3pt);
  \draw[fill] (0,-0.75) circle (0.3pt);
  \draw (0,-1.95) ellipse (0.5 and 0.5);
  
  \node at (-1,3) {$\tilde{G}$};
  \node at (-0.7,2.3) {$V_1$};
  \node at (-0.7,1) {$V_2$};
  \node at (-0.7,-1.6) {$V_n$};
  
  \node at (0,1.95) {\footnotesize{$K_a$}};
  \node at (0,0.65) {\footnotesize{$K_a$}};
  \node at (0,-1.95) {\footnotesize{$K_a$}};
  
  \node[vertex] (u11) at (2.5,2.3) [label={[label distance=0]0:$u_{1,1}$}] {};
  \draw[fill] (2.5,2.05) circle (0.3pt);
  \draw[fill] (2.5,1.95) circle (0.3pt);
  \draw[fill] (2.5,1.85) circle (0.3pt);
  \node[vertex] (u1b) at (2.5,1.6) [label={[label distance=0]0:$u_{1,b}$}] {};
  \node[vertex] (u21) at (2.5,1) [label={[label distance=0]0:$u_{2,1}$}] {};
  \draw[fill] (2.5,0.75) circle (0.3pt);
  \draw[fill] (2.5,0.65) circle (0.3pt);
  \draw[fill] (2.5,0.55) circle (0.3pt);
  \node[vertex] (u2b) at (2.5,0.3) [label={[label distance=0]0:$u_{2,b}$}] {};
  \draw[fill] (2.5,-0.55) circle (0.3pt);
  \draw[fill] (2.5,-0.65) circle (0.3pt);
  \draw[fill] (2.5,-0.75) circle (0.3pt);
  \node[vertex] (un1) at (2.5,-1.6) [label={[label distance=0]0:$u_{n,1}$}] {};
  \draw[fill] (2.5,-1.85) circle (0.3pt);
  \draw[fill] (2.5,-1.95) circle (0.3pt);
  \draw[fill] (2.5,-2.05) circle (0.3pt);
  \node[vertex] (unb) at (2.5,-2.3) [label={[label distance=0]0:$u_{n,b}$}] {};
  
  \node at (2.5,2.9) {$U$};
  
  \draw (0.4,2.45) -- (2.3,2.35);
  \draw (0.4,1.45) -- (2.3,1.55);
  \draw ($(0.4,2.45)!0.075!(2.3,1.55)$) -- ($(0.4,2.45)!0.925!(2.3,1.55)$);
  \draw ($(0.4,1.45)!0.075!(2.3,2.35)$) -- ($(0.4,1.45)!0.925!(2.3,2.35)$);
  \draw (0.4,1.15) -- (2.3,1.05);
  \draw (0.4,0.15) -- (2.3,0.25);
  \draw ($(0.4,1.15)!0.075!(2.3,0.25)$) -- ($(0.4,1.15)!0.925!(2.3,0.25)$);
  \draw ($(0.4,0.15)!0.075!(2.3,1.05)$) -- ($(0.4,0.15)!0.925!(2.3,1.05)$);
  \draw (0.4,-2.45) -- (2.3,-2.35);
  \draw (0.4,-1.45) -- (2.3,-1.55);
  \draw ($(0.4,-2.45)!0.075!(2.3,-1.55)$) -- ($(0.4,-2.45)!0.925!(2.3,-1.55)$);
  \draw ($(0.4,-1.45)!0.075!(2.3,-2.35)$) -- ($(0.4,-1.45)!0.925!(2.3,-2.35)$);
  
  \draw (2.675,-6) ellipse (2.5 and 0.75);
  \node at (0,-5.75) {$W$};
  
  \draw[dashed] (6,0) ellipse (1.1 and 2.75);
  \node at (6.5,3) {$U'$};
  
  \node at (2.675,-6) {$K_{a k}$};
  \node at (6,0) {$\overline{K}_{(b-1) k}$};
  
  \coordinate (A) at (0.25,-5.5);
  \coordinate (B) at (5.1,-5.5);
  \coordinate (C) at (-1.25,-3);
  \coordinate (D) at (6.65,-3);
  
  \draw (A) -- (C);
  \draw (B) -- (D);
  \draw ($(A)!0.025!(D)$) -- ($(A)!0.925!(D)$);
  \draw ($(B)!0.025!(C)$) -- ($(B)!0.925!(C)$);
 \end{tikzpicture}
 \caption{The graph $G_k$.} \label{figure_exact_t_tough_dp_complete}
 \end{center}
 \end{figure}

Obviously, $G_k$ can be constructed from $G$ in polynomial time. Now we show that $\alpha(G) = k$ if and only if $\tau(G_k) = t = a/b$, i.e.
\begin{enumerate}
 \item[--] if $\alpha(G) > k$, then 
            \[ \frac{|S|}{\omega(G_k - S)} > t \]
           for any cutset $S$ of $G_k$;
 \item[--] if $\alpha(G) < k$, then there exists a cutset $S_0$ of $G_k$ such that
            \[ \frac{|S_0|}{\omega(G_k - S_0)} < t \text{;} \]
 \item[--] if $\alpha(G) = k$, then 
            \[ \frac{|S|}{\omega(G_k - S)} > t \]
           for any cutset $S$ of $G_k$ and there exists a cutset $S_0$ of $G_k$ such that
            \[ \frac{|S_0|}{\omega(G_k - S_0)}  < t \text{.} \]
\end{enumerate}

Let $S \subseteq V \left( G_k \right)$ be an arbitrary cutset of $G_k$. Since $S$ is a cutset, it must contain $W$. Let
\[ I = \{ i \in [n] \mid V_i \subseteq S \} \text{.} \]
After the removal of $W$, the removal of any vertex of $U \cup U'$ or the removal of only a proper subset of $V_i$ for any $i \in [n]$ does not disconnect anything in the graph.
So consider the cutset
\[ S' = S \setminus \left[ (U \cup U') \cup \left( \bigcup_{i \not\in I} V_i \right) \right] \text{.} \]
In $G_k-S'$ there are two types of components: isolated vertices from $U \cup U'$ and components containing at least one vertex from $V$. There are at most $\alpha(G)$ components of the second type since picking a vertex from each such component forms an independent set of $G[V]$. On the other hand, there are exactly $b|I| + |U'| = b|I| + (b-1)k$ components of the first type. So
\[ |S| \ge |S'| = \sum_{i \in I} |V_i| + |W| = a|I| + ak = a \big( |I|+k \big) \]
and
\[ \omega(G_k - S) = \omega(G_k - S') \le \alpha(G) + b|I| + (b-1)k = b \big( |I|+k \big) + \big( \alpha(G) - k \big) \text{.} \]
Therefore,
\[ \frac{|S|}{\omega(G_k - S)} \ge \frac{|S'|}{\omega(G_k - S')} \ge \frac{a \big( |I|+k \big)}{b \big( |I|+k \big) + \big( \alpha(G) - k \big)} \text{.} \]

Let $\{ v_j \in V(G) \mid j \in J \}$ be an independent set of size $\alpha(G)$ in the graph $G$ for some $J \subseteq [n]$, and consider another cutset
\[ S_0 = \left( \bigcup_{i \not\in J} V_i \right) \cup W \]
in $G_k$.
Then
\[ |S_0| = a \big( n - \alpha(G) \big) + ak = a \big( n - \alpha(G) + k \big) \]
and (similarly as before)
\[ \omega(G_k - S_0) = \alpha(G) + b \big( n - \alpha(G) \big) + (b-1)k = b \big( n - \alpha(G) + k \big) + \big( \alpha(G) - k \big) \text{,} \]
so
\[ \frac{|S_0|}{\omega(G_k - S_0)} = \frac{a \big( n - \alpha(G) + k \big)}{b \big( n - \alpha(G) + k \big) + \big( \alpha(G) - k \big)} \text{.} \]
\medskip

\textit{Case 1:} $\alpha(G) < k$.

 Then
 \begin{gather*} 
  \frac{|S|}{\omega(G_k - S)} \ge \frac{a \big( |I|+k \big)}{b \big( |I|+k \big) + \big( \alpha(G) - k \big)} > \frac{a \big( |I| + k \big)}{b \big( |I| + k \big)} = \frac{a}{b} = t
 \end{gather*}
 holds for every cutset $S$ of $G_k$, which implies that $\tau(G_k) > t$.

 \medskip

\textit{Case 2:} $\alpha(G) = k$.
 
 Then
 \begin{gather*} 
  \frac{|S|}{\omega(G_k - S)} \ge \frac{a \big( |I|+k \big)}{b \big( |I|+k \big) + \big( \alpha(G) - k \big)} = \frac{a \big( |I| + k \big)}{b \big( |I| + k \big)} = \frac{a}{b} = t
 \end{gather*}
 holds for every cutset $S$ of $G_k$, which implies that $\tau(G_k) \ge t$.
 
 On the other hand,
 \[ \tau(G_k) \le \frac{|S_0|}{\omega(G_k - S_0)} = \frac{a \big( n - \alpha(G) + k \big)}{b \big( n - \alpha(G) + k \big) + \big( \alpha(G) - k \big)} = \frac{an}{bn} = \frac{a}{b} = t \text{.} \]
 
 Hence, $\tau(G_k) = t$.
 
\medskip

\textit{Case 3:} $\alpha(G) > k$.

 Then
 \[ \tau(G_k) \le \frac{|S_0|}{\omega(G_k - S_0)} = \frac{a \big( n - \alpha(G) + k \big)}{b \big( n - \alpha(G) + k \big) + \big( \alpha(G) - k \big)} < \frac{a \big( n - \alpha(G) + k \big)}{b \big( n - \alpha(G) + k \big)} = \frac{a}{b} = t \text{.} \]

\medskip
 
This means that $\alpha(G) = k$ if and only if $\tau(G_k) = t = a/b$. \hfill\qed

\bigskip

The construction we used here is a slight modification of the one that Bauer et al$.$ used in \cite{recognize_toughness_v2} for proving that for any rational number $t \ge 1$ recognizing $t$-tough graphs is coNP-complete; in their proof a variant of {\scshape IndependenceNumber} is reduced to the complement of {\scshape $t$-Tough}.

Since in our proof $\alpha(G) > k$ if and only if $\tau(G_k) < t$, we can reduce {\scshape IndependenceNumber} to the complement of {\scshape $t$-Tough}, therefore providing another proof of Theorem~\ref{t_tough_conp_complete}.

\section{The complexity of determining the toughness of bipartite graphs, proofs of Theorems \ref{exact_t_tough_bipartite_dp_complete} and \ref{t_tough_k_connected_bipartite_conp_complete}} \label{section_bipartite_graphs}

Let $G$ be an arbitrary connected graph on the vertices $v_1, \ldots, v_n$ and let $B(G)$ be the following bipartite graph.
Let
 \[ V \big( B(G) \big) = \big\{ v_{i,1}, v_{i,2} \mid i \in [n] \big\} \]
and for all $i,j \in [n]$ if $v_i v_j \in E(G)$, then connect $v_{i,1}$ to $v_{j,2}$ and $v_{i,2}$ to $v_{j,1}$. Also for all $i \in [n]$ connect $v_{i,1}$ to $v_{i,2}$. See Figure~\ref{figure_exact_t_tough_bipartite_dp_complete}.

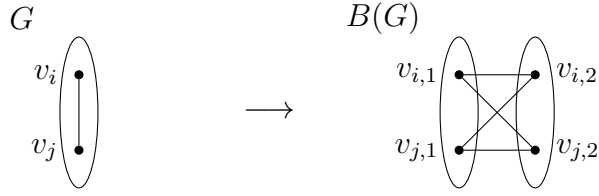
\begin{figure}[H]
\begin{center}
\begin{tikzpicture}
 \tikzstyle{vertex}=[draw,circle,fill=black,minimum size=3,inner sep=0]
 
 \draw (0,0) ellipse (0.25 and 1);
 \draw (5,0) ellipse (0.25 and 1);
 \draw (6,0) ellipse (0.25 and 1);
 
 \node at (2.5,0) {$\longrightarrow$};
 
 \node at (-0.75,1.25) {$G$};
 \node at (4,1.25) {$B(G)$};
 
 \node[vertex] (vi) at (0,0.5) [label={[label distance=2]180: $v_i$}] {};
 \node[vertex] (vj) at (0,-0.5) [label={[label distance=2]180: $v_j$}] {};
 
 \node[vertex] (vi1) at (5,0.5) [label={[label distance=2]180: $v_{i,1}$}] {};
 \node[vertex] (vi2) at (6,0.5) [label={[label distance=2]0: $v_{i,2}$}] {};
 \node[vertex] (vj1) at (5,-0.5) [label={[label distance=2]180: $v_{j,1}$}] {};
 \node[vertex] (vj2) at (6,-0.5) [label={[label distance=2]0: $v_{j,2}$}] {};
 
 \draw (vi) -- (vj);
 \draw (vi1) -- (vj2);
 \draw (vi2) -- (vj1);
 \draw (vi1) -- (vi2);
 \draw (vj1) -- (vj2);

\end{tikzpicture}
\caption{The construction of the bipartite graph $B(G)$.} \label{figure_exact_t_tough_bipartite_dp_complete}
\end{center}
\end{figure}

To prove Theorems~\ref{exact_t_tough_bipartite_dp_complete} and \ref{t_tough_k_connected_bipartite_conp_complete}, first we show how the toughness of $B(G)$ can be computed from the toughness of $G$.

\begin{claim} \label{toughness_of_BG}
 Let $G$ be an arbitrary connected graph. Then $\tau \big( B(G) \big) = \min \big( 2 \tau(G),1 \big)$.
\end{claim}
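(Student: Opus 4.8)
The plan is to prove the two inequalities $\tau(B(G))\le\min(2\tau(G),1)$ and $\tau(B(G))\ge\min(2\tau(G),1)$ separately. First I would record that $B(G)$ is the bipartite graph with classes $X=\{v_{i,1}\}$ and $Y=\{v_{i,2}\}$ in which $v_{i,1}v_{j,2}$ is an edge exactly when $i=j$ or $v_iv_j\in E(G)$ (biadjacency matrix $A(G)+I$, so all matching edges $v_{i,1}v_{i,2}$ are present). Assuming $G$ has at least two vertices, $B(G)$ is bipartite and noncomplete, hence $\tau(B(G))\le 1$. For the bound $\tau(B(G))\le 2\tau(G)$ (needed only when $G$ is noncomplete) I would take a tough set $T$ of $G$ and its doubling $S=\{v_{i,1},v_{i,2}:v_i\in T\}$: using the matching edges, each component of $G-T$ lifts to a single component of $B(G)-S$ and distinct components stay separated, whence $\omega(B(G)-S)=\omega(G-T)$ and $|S|/\omega(B(G)-S)=2\tau(G)$. (If $G=K_n$ then $B(G)=K_{n,n}$, and only $\le 1$ is needed.)

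The substance is the lower bound. Fix a cutset $S$ and classify each $v_i$ as lying in $D$ (both copies in $S$), $H$ (one copy in $S$), or $N$ (no copy in $S$), so that $|S|=2|D|+|H|$. The key structural step is to describe $\omega(B(G)-S)$ exactly. Projecting surviving copies down to $G$, a $B(G)$-edge survives precisely over a $G$-edge that is not ``fully cut'', and one checks that for $i,j\in H$ the edge $v_iv_j$ is fully cut iff the two surviving copies carry the same index. Splitting $H=H^{(1)}\cup H^{(2)}$ by which copy survives, this gives $\omega(B(G)-S)=\omega(G^\ast)$, where $G^\ast$ is obtained from $G-D$ by deleting the edges inside $H^{(1)}$ and the edges inside $H^{(2)}$. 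The lower bound thus reduces to the purely $G$-theoretic inequality $2|D|+|H|\ge\min(2\tau(G),1)\,\omega(G^\ast)$.

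To prove this I would partition the components of $G^\ast$ into those that meet $N$ and those contained in $H$, and bound the two resulting ratios. A component contained in $H$ contains at least one half-vertex, so charging such components to their half-vertices gives ratio $\ge 1\ge\min(2\tau(G),1)$. For the components meeting $N$ the target ratio is $2\tau(G)$; the intended mechanism is to extract from $S$ a genuine vertex cutset of $G$ containing $D$ that realizes at least these components, and then invoke the toughness of $G$ (using Proposition~\ref{obs_below1_v2} when $\tau(G)\le 1$, so that the estimate $\omega(G-S')\le|S'|/\tau(G)$ is available even for non-separating sets $S'$). The mediant inequality then combines the two parts to yield overall ratio $\ge\min(2\tau(G),1)$.

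The main obstacle is exactly this $N$-meeting part. Deleting one copy of a half-vertex removes only the ``same-index'' edges at that vertex, so in $G$ it behaves like deleting edges rather than a vertex, while being charged only $1$ instead of $2$. Consequently half-vertices can split a single component of $G-D$ into several components of $G^\ast$ far more cheaply than doubled vertices, and the obvious comparisons — deleting all of $H$, deleting the smaller class $H^{(c)}$, or deleting a random half of $H$ — each lose the crucial factor $2$ and can fail outright (small path and clique-with-pendant instances already exhibit $E[\omega(G-T)]<\omega(G^\ast)$, and show that no vertex cutset of the affordable size realizes $\omega(G^\ast)$). The delicate task is therefore to charge every extra component produced by the half-vertices to enough deleted copies; I expect this to require a careful, essentially per-component argument that exploits that the deleted edges lie inside $H^{(1)}$ and inside $H^{(2)}$, together with the case hypothesis $\tau(G)\le 1/2$ in the regime where the required ratio is $2\tau(G)$ rather than $1$.
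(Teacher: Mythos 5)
Your upper bound and your structural reduction are both correct --- the graph $G^\ast$ and the classes $D,H,N$ are a faithful translation of the paper's setup --- but the proof has a genuine gap exactly where you say it does: the inequality $2|D|+|H| \ge \min\big(2\tau(G),1\big)\,\omega(G^\ast)$ is never established, and it is the entire content of the claim. The obstacle you describe, however, is an artifact of your partition, not an intrinsic difficulty. You classify a component as ``hard'' as soon as it meets $N$ and then try to account for all such components by a vertex cutset of $G$; since half-vertices delete edges of $G$ at unit cost, no affordable cutset realizes them all, as your examples correctly show. The fix is not a delicate per-component argument but a different split.

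Partition the components of $G^\ast$ by whether they contain a vertex of $H$. First, the components meeting $H$ number at most $|H|$, trivially, because each vertex of $H$ lies in exactly one component --- and this pays for every component that half-vertices ``split off'', whether or not it also meets $N$. Second, a component of $G^\ast$ disjoint from $H$ is a component of $G-D$: the only edges deleted in passing from $G-D$ to $G^\ast$ have both endpoints in $H$, so such a component has no edge leaving it in $G-D$ either. Hence there are at most $\omega(G-D)$ components of this kind, and $\omega(G-D) \le |D|/\tau(G)$ by Proposition~\ref{obs_below1_v2} (applicable since $\tau(G)\le 1$ in the relevant case; in the degenerate case $D=\emptyset$, connectivity of $G$ forces every component of $G^\ast$ to meet $H$, so the bound holds anyway). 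Now if $\tau(G)\le 1/2$, then $\omega(G^\ast) \le |H| + |D|/\tau(G) \le \big(|H|+2|D|\big)/\big(2\tau(G)\big) = |S|/\big(2\tau(G)\big)$, which is the lower bound $\tau\big(B(G)\big) \ge 2\tau(G)$. This is precisely the paper's Case 1: its $S'_1$, $S'_2$ and ``unpaired/paired'' components are your $H$, doubled $D$, and $H$-meeting/$H$-avoiding components. For $\tau(G) > 1/2$ the paper does not redo the estimate: it invokes Proposition~\ref{spanning_subgraph_with_toughness_1/2} to pass to a spanning subgraph of $G$ with toughness exactly $1/2$, applies Case 1 to it to obtain a spanning subgraph of $B(G)$ of toughness $1$, and finishes with $\tau\big(B(G)\big)\le 1$ for bipartite graphs. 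So the two ideas missing from your attempt are: charge $H$-meeting components to half-vertices regardless of their intersection with $N$, invoking the toughness of $G$ only for $D$ and only for $H$-avoiding components; and reduce the regime $\tau(G)>1/2$ to toughness exactly $1/2$ via the spanning-subgraph proposition.
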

\begin{proof}
 Let $G$ be an arbitrary graph on the vertices $v_1, \ldots, v_n$ with $\tau(G) = t$.
 
 \bigskip
 
 \textit{Case 1:} $t \le 1/2$.
 
 Let $G' = B(G)$ and let $S_0 \subseteq V(G)$ be an arbitrary tough set in $G$. (Note that since $\tau(G) \le 1/2$, the graph $G$ is noncomplete, therefore it has a tough set.) Consider the vertex set
 \[ S'_0 = \{ v_{i,1}, v_{i,2} \mid v_i \in S_0 \} \text{.} \]
 Clearly, $S'_0$ is a cutset in $G'$ and
 \[ \omega(G' - S'_0) = \omega(G-S_0) = \frac{|S_0|}{t} = \frac{|S'_0|}{2t} \text{,} \]
 so $\tau(G') \le 2t$.
 
 Now we prove that $\tau(G') \ge 2t$, i.e.
  \[ \omega(G' - S') \le \frac{|S'|}{2t} \]
 holds for any cutset $S'$ of $G'$. Therefore, let $S'$ be an arbitrary cutset in $G'$ and let
 \[ S'_1 = \{ v_{i,1} \in S' \mid v_{i,2} \not\in S' \} \cup \{ v_{i,2} \in S' \mid v_{i,1} \not\in S' \} \]
 and
 \[ S'_2 = S' \setminus S'_1 \text{.} \]
 Consider the components of $G' - S'$ which contain either both or none of the vertices $v_{i,1}, v_{i,2}$ for any $i \in [n]$. These components of $G'-S'$ are also components of $G'-S'_2$, so (similarly as before) the number of these components is at most $|S'_2|/2t$.
 The number of the remaining components -- so in which there is at least one vertex without its pair -- can be at most $|S'_1|$, because the pair of the vertex mentioned before must be in $S'_1$. Since $t \le 1/2$,
 \[ \omega(G' - S') \le \frac{|S'_2|}{2t} + |S'_1| \le \frac{|S'_2|}{2t} + \frac{|S'_1|}{2t} = \frac{|S'|}{2t} \text{,} \]
 which implies that $\tau(G') \ge 2t$.
 
 Hence, 
 \[ \tau(G') = 2t = 2\tau(G) = \min \big( 2 \tau(G),1 \big) \text{.} \]
 
 \bigskip
 
 \textit{Case 2:} $t > 1/2$.
 
 By Proposition~\ref{spanning_subgraph_with_toughness_1/2}, there exists a spanning subgraph $H$ with $\tau(H)=1/2$. Then $B(H)$ is a spanning subgraph of $B(G)$, so
 \[ \tau \big( B(G) \big) \ge \tau \big( B(H) \big) \text{,} \]
 and as we saw in Case 1,
 \[ \tau \big( B(H) \big) = 2 \tau(H) = 1 \text{.} \]
 Since $B(G)$ is a bipartite graph, $\tau \big( B(H) \big) \le 1$. Hence,
 \[ \tau \big( B(G) \big) = 1 = \min \big( 2 \tau(G),1 \big) \text{.} \]
\end{proof}

\textbf{Proof of Theorem~\ref{exact_t_tough_bipartite_dp_complete} and alternative proof of Theorem~\ref{1_tough_bipartite_conp_complete}.}
 In Corollary~\ref{exact_t_tough_bipartite_DP} we already proved that if $t \le 1$, then $\text{{\scshape Exact-$t$-Tough-Bipartite}} \in \text{DP}$, moreover, $\text{{\scshape (Exact-)$1$-Tough-Bipartite}} \in \text{coNP}$.
 
 Now we reduce the DP-complete problem {\scshape Exact-$t/2$-Tough} to {\scshape Exact-$t$-Tough-Bipartite} if $t<1$, and the coNP-complete problem {\scshape $1/2$-Tough} to {\scshape (Exact-)$1$-Tough-Bipartite}.
 
 Let $t < 1$ be a positive rational number and let $G$ be an arbitrary connected graph. By Claim~\ref{toughness_of_BG},
 \begin{itemize}
  \item[--] $\tau \big( B(G) \big) = t$ if and only if $\tau(G) = t/2$, and
  \item[--] $\tau \big( B(G) \big) = 1$ if and only if $\tau(G) \ge 1/2$,
 \end{itemize}
 thus the statement of the theorem follows. \hfill\qed

\bigskip

\textbf{Proof of Theorem~\ref{t_tough_bipartite_conp_complete}.}
Since in the above proof $\tau \big( B(G) \big) \ge t$ if and only if $\tau(G) \ge t/2$ for any positive rational number $t \le 1$, we can reduce {\scshape $t/2$-Tough} to {\scshape $t$-Tough-Bipartite}, so the statement of the theorem follows. \hfill\qed

Note that the case $t=1$ was already proved by Kratsch et al$.$ in \cite{recognize_toughness_bipartite}. In their proof the vertices $v_{i,1}$ and $v_{i,2}$ are not connected by an edge, but by a path with two inner vertices. With that construction the original graph is 1-tough if and only if the obtained bipartite graph is exactly 1-tough. However, due to the inner vertices of the paths mentioned before, the constructed bipartite graph has a lot of vertices of degree 2, so these graphs are neither regular (except for cycles) nor 3-connected.

To deal with the problem of determining the complexity of recognizing 3-connected bipartite graphs, we only need one more proposition.

\begin{prop} \label{connectivitynumber_of_BG}
 Let $G$ be an arbitrary graph. Then $\kappa \big( B(G) \big) \ge \kappa(G)$.
\end{prop}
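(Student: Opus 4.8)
The plan is to work directly with vertex cuts: I will show that every $S' \subseteq V\big(B(G)\big)$ with $|S'| < \kappa(G)$ leaves $B(G)-S'$ connected, which is exactly what is needed since $\kappa\big(B(G)\big)$ is the minimum size of a cutset. The main device is the projection $\pi$ sending both $v_{i,1}$ and $v_{i,2}$ to $v_i$. Given such an $S'$, set $S=\pi(S')=\{\,v_i : \{v_{i,1},v_{i,2}\}\cap S' \neq \emptyset\,\}$; then $|S|\le |S'| < \kappa(G)$, so $G-S$ is connected (and nonempty, since $|S|<\kappa(G)\le |V(G)|-1$).

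First I would prove that the copies of all vertices outside $S$ lie in a single component $C$ of $B(G)-S'$. The twist edge $v_{i,1}v_{i,2}$ keeps the two copies of each $v_i\notin S$ together, and every edge $v_iv_j$ of $G-S$ induces in $B(G)$ the $4$-cycle $v_{i,1}v_{i,2}v_{j,1}v_{j,2}v_{i,1}$ (using the two twist edges together with the two cross edges $v_{i,2}v_{j,1}$ and $v_{j,2}v_{i,1}$), all of whose vertices survive in $B(G)-S'$. Since $G-S$ is connected, chaining these $4$-cycles along a path of $G-S$ shows that $\{\,v_{i,1},v_{i,2} : v_i\notin S\,\}$ is contained in one component $C$.

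It remains to absorb the ``leftover'' copies: vertices $v_i\in S$ for which exactly one of $v_{i,1},v_{i,2}$ lies in $S'$, so that the other copy survives. Here I would invoke $\kappa(G)\le\delta(G)$: since $\deg_G(v_i)\ge\delta(G)\ge\kappa(G)>|S|\ge |N(v_i)\cap S|$, the vertex $v_i$ has a neighbor $v_j\notin S$, and then the appropriate cross edge ($v_{i,1}v_{j,2}$ or $v_{i,2}v_{j,1}$) joins the surviving copy of $v_i$ to $C$. Hence every vertex of $B(G)-S'$ lies in $C$, so $B(G)-S'$ is connected, which gives $\kappa\big(B(G)\big)\ge\kappa(G)$. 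The only genuinely delicate point is precisely this last step---ensuring each surviving half of a split pair attaches to $C$---which is where the inequality $\kappa(G)\le\delta(G)$ is essential; the remaining cases (for instance $G-S$ reduced to a single vertex, handled by the twist edge alone) are immediate.
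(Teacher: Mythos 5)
Your proof is correct and is essentially the paper's argument in contrapositive form: both project the removed vertex set down to $G$, lift the connectivity of $G$ minus the projection back to $B(G)$ via the twist and cross edges, and invoke $\kappa(G) \le \delta(G)$ to handle pairs $\{v_{i,1}, v_{i,2}\}$ with exactly one copy removed. The paper's Case 1 (the surviving full pairs meet at least two components, so the projection is a cutset of $G$ of size at most $|S|$) is your first step contraposed, and its Case 2 (a component avoiding all full pairs forces $|S| \ge d(v_i)+1 \ge \delta(G)+1 > \kappa(G)$) plays exactly the role of your pigeonhole step attaching each stranded half-copy to the main component.
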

\begin{proof}
 Let $S$ be an arbitrary cutset in $B(G)$. We need to show that $|S| \ge \kappa(G)$.
 
 Let
 \[ W = \big\{ v_{i,1}, v_{i,2} ~ \big| ~ \{v_{i,1}, v_{i,2}\} \cap S = \emptyset \big\} \text{.} \]
 \medskip
 
 \textit{Case 1:} the vertices of $W$ belong to at least two components of $B(G)-S$.
 
 Then
 \[ S' = \{ v_j \in V(G) \mid v_{j,1}, v_{j,2} \notin W \} \]
 is a cutset in $G$: its removal from $G$ disconnects the corresponding vertices of $W$ that belong to different components of $B(G) - S$. Obviously,
 \[ |S| \ge |S'| \ge \kappa(G) \text{.} \]
 
 \medskip
 
 \textit{Case 2:} all vertices of $W$ belong to one component of $B(G)-S$.
 
 Since $S$ is a cutset in $B(G)$, there exists a component $L$ for which $L \cap W = \emptyset$. We can assume that $v_{i,1} \in L$ for some $i \in [n]$. Then $v_{i,2} \in S$ since $L \cap W = \emptyset$. Also, for every $j \in [n]$, if $v_{i} v_{j} \in E(G)$, then either $v_{j,2} \in S$ or $v_{j,2} \in L$, and in the latter case $v_{j,1} \in S$ holds since $L \cap W = \emptyset$. Therefore, 
 \[ |S| \ge d(v_{i,1}) = d(v_i) + 1 \ge \delta(G)+1 > \kappa(G) \text{.} \]
 
 \medskip
 
 Hence, $\kappa \big( B(G) \big) \ge \kappa(G)$.
\end{proof}

\textbf{Proof of Theorem~\ref{t_tough_k_connected_bipartite_conp_complete}.} Let $k \ge 2$ be an integer and $t \le 1$ positive rational number. Applying the proof of Theorem~\ref{t_tough_bipartite_conp_complete} for $k$-connected bipartite graphs, the statement of theorem follows from Proposition~\ref{connectivitynumber_of_BG}. \hfill\qed

\section{On the toughness of regular graphs, proofs of Theorems \ref{1_tough_at_least_6_regular_bipartite_conp_complete}, \ref{1/2_tough_at_least_5_regular_conp_complete}, and \ref{less_than_2/3_tough_3_regular_poly}} \label{section_regular_graphs}

For any positive rational number $t$ and positive integer $r$ let {\scshape $t$-Tough-$r$-Regular} denote the problem of determining whether a given $r$-regular graph is $t$-tough, and let {\scshape $t$-Tough-$r$-Regular-Bipartite} denote the same problem for bipartite graphs.

For any odd number $r \ge 5$ let $H_r$ be the complement of the graph whose vertex set is
 \[ V = \{ w, u_1, \ldots, u_{r+1} \} \]
 and whose edge set is
 \[ E = \left( \bigcup_{i=1}^{\frac{r-1}{2}} \{ u_{i}, u_{r-i+2} \} \right) \cup \{ w, u_{(r+1)/2} \} \cup \{ w, u_{(r+3)/2} \} \text{.} \]
For any even number $r \ge 6$ let $H_r$ be a bipartite graph with color classes
 \[ A = \{ w_a, a_1, \ldots, a_{r-1} \} \qquad \text{and} \qquad B = \{ w_b, b_1, \ldots, b_{r-1} \} \text{,} \]
which can be obtained from the complete bipartite graph by removing the edge $\{ w_a, w_b \}$. (See the graphs $\overline{H}_5$, $H_5$ and $H_6$ in Figure~\ref{figure_H_r}.)

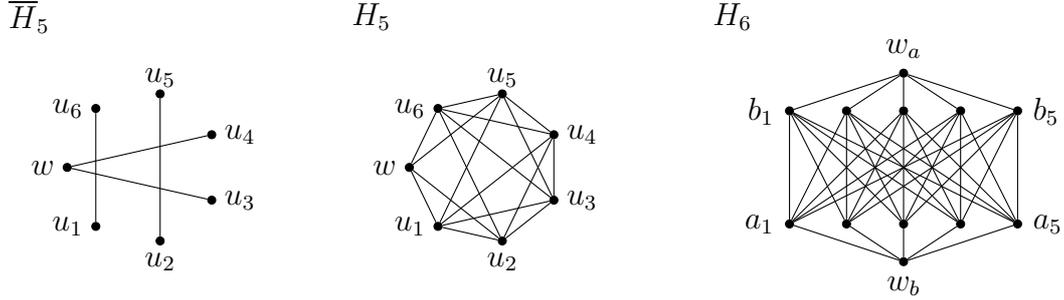
\begin{figure}[H]
\begin{center}
\begin{tikzpicture}
 \tikzstyle{vertex}=[draw,circle,fill=black,minimum size=3,inner sep=0]
 
 \node at (-1.5,2) {$\overline{H}_5$};
 
 \node[vertex] (w) at (180:1) [label={[label distance=-1]180: $w$}] {};
 \node[vertex] (u1) at (180+1*360/7:1) [label={[label distance=-1]180: $u_1$}] {};
 \node[vertex] (u2) at (180+2*360/7:1) [label={[label distance=-1.5]270: $u_2$}] {};
 \node[vertex] (u3) at (180+3*360/7:1) [label={[label distance=-1]0: $u_3$}] {};
 \node[vertex] (u4) at (180+4*360/7:1) [label={[label distance=-1]0: $u_4$}] {};
 \node[vertex] (u5) at (180+5*360/7:1) [label={[label distance=-3]90: $u_5$}] {};
 \node[vertex] (u6) at (180+6*360/7:1) [label={[label distance=-1]180: $u_6$}] {};
 
 \draw (w) -- (u3);
 \draw (w) -- (u4);
 \draw (u1) -- (u6);
 \draw (u2) -- (u5);
 
 \begin{scope}[shift={(4.5,0)}]
 \node at (-1.5,2) {$H_5$};
 
 \node[vertex] (w) at (180:1) [label={[label distance=-1]180: $w$}] {};
 \node[vertex] (u1) at (180+1*360/7:1) [label={[label distance=-1]180: $u_1$}] {};
 \node[vertex] (u2) at (180+2*360/7:1) [label={[label distance=-1.5]270: $u_2$}] {};
 \node[vertex] (u3) at (180+3*360/7:1) [label={[label distance=-1]0: $u_3$}] {};
 \node[vertex] (u4) at (180+4*360/7:1) [label={[label distance=-1]0: $u_4$}] {};
 \node[vertex] (u5) at (180+5*360/7:1) [label={[label distance=-3]90: $u_5$}] {};
 \node[vertex] (u6) at (180+6*360/7:1) [label={[label distance=-1]180: $u_6$}] {};
 
 \draw (w) -- (u1);
 \draw (w) -- (u2);
 \draw (w) -- (u5);
 \draw (w) -- (u6);
 \draw (u1) -- (u2);
 \draw (u1) -- (u3);
 \draw (u1) -- (u4);
 \draw (u1) -- (u5);
 \draw (u2) -- (u3);
 \draw (u2) -- (u4);
 \draw (u2) -- (u6);
 \draw (u3) -- (u4);
 \draw (u3) -- (u5);
 \draw (u3) -- (u6);
 \draw (u4) -- (u5);
 \draw (u4) -- (u6);
 \draw (u5) -- (u6);
 \end{scope}
 
 \begin{scope}[shift={(10,0)}]
 \node at (-2.25,2) {$H_6$};
 
 \node[vertex] (wa) at (0,1.25) [label={above: $w_a$}] {};
 \node[vertex] (wb) at (0,-1.25) [label={below: $w_b$}] {};
 \node[vertex] (a1) at (-1.5,-0.75) [label={left: $a_1$}] {};
 \node[vertex] (a2) at (-0.75,-0.75) {};
 \node[vertex] (a3) at (0,-0.75) {};
 \node[vertex] (a4) at (0.75,-0.75) {};
 \node[vertex] (a5) at (1.5,-0.75) [label={right: $a_5$}] {};
 \node[vertex] (b1) at (-1.5,0.75) [label={left: $b_1$}] {};
 \node[vertex] (b2) at (-0.75,0.75) {};
 \node[vertex] (b3) at (0,0.75) {};
 \node[vertex] (b4) at (0.75,0.75) {};
 \node[vertex] (b5) at (1.5,0.75) [label={right: $b_5$}] {};
 
 \draw (wa) -- (b1);
 \draw (wa) -- (b2);
 \draw (wa) -- (b3);
 \draw (wa) -- (b4);
 \draw (wa) -- (b5);
 \draw (wb) -- (a1);
 \draw (wb) -- (a2);
 \draw (wb) -- (a3);
 \draw (wb) -- (a4);
 \draw (wb) -- (a5);
 \draw (a1) -- (b1);
 \draw (a1) -- (b2);
 \draw (a1) -- (b3);
 \draw (a1) -- (b4);
 \draw (a1) -- (b5);
 \draw (a2) -- (b1);
 \draw (a2) -- (b2);
 \draw (a2) -- (b3);
 \draw (a2) -- (b4);
 \draw (a2) -- (b5);
 \draw (a3) -- (b1);
 \draw (a3) -- (b2);
 \draw (a3) -- (b3);
 \draw (a3) -- (b4);
 \draw (a3) -- (b5);
 \draw (a4) -- (b1);
 \draw (a4) -- (b2);
 \draw (a4) -- (b3);
 \draw (a4) -- (b4);
 \draw (a4) -- (b5);
 \draw (a5) -- (b1);
 \draw (a5) -- (b2);
 \draw (a5) -- (b3);
 \draw (a5) -- (b4);
 \draw (a5) -- (b5);
 \end{scope}

\end{tikzpicture}
\caption{The graphs $\overline{H}_5$, $H_5$ and $H_6$.} \label{figure_H_r}
\end{center}
\end{figure}
 
\begin{claim} \label{toughness_of_H_r}
 For any integer $r \ge 5$, $\tau(H_r) \ge 1$.
\end{claim}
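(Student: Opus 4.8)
The plan is to prove the claim separately for even and odd $r$, since the definition of $H_r$ is different in the two cases; in both cases the goal is to verify that $\omega(H_r - S) \le |S|$ for every cutset $S$, which by definition is exactly the condition $\tau(H_r) \ge 1$.

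For even $r \ge 6$ the graph $H_r$ is the complete bipartite graph $K_{r,r}$ with the single edge $\{w_a, w_b\}$ deleted, so I would argue directly from the bipartition $A, B$. Given a cutset $S$, I set $A' = A \setminus S$ and $B' = B \setminus S$ and observe that the surviving graph $H_r - S$ is the complete bipartite graph between $A'$ and $B'$, missing only the edge $w_a w_b$ when both $w_a \in A'$ and $w_b \in B'$. Such a graph is connected (so $\omega = 1 \le |S|$) unless one of the surviving classes is empty or the deleted edge isolates a vertex. If, say, $A' = \emptyset$, then the survivors form an independent set and $\omega(H_r - S) = r - |S \cap B| \le r + |S \cap B| = |S|$. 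In the remaining disconnected configurations the only obstruction is the missing edge $w_a w_b$, which can create at most one extra isolated vertex, so $\omega(H_r - S) = 2$, while $|S| \ge r - 1 \ge 5$. Hence $\omega(H_r - S) \le |S|$ in every case.

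For odd $r \ge 5$ I would exploit that $H_r$ is defined as a complement: by construction $Y := \overline{H_r}$ is the disjoint union of $(r-1)/2$ independent edges together with the path $u_{(r+1)/2}\, w\, u_{(r+3)/2}$, hence a forest of maximum degree $2$ on the $r+2$ vertices. Writing $m = |V \setminus S|$, any surviving vertex of $H_r - S$ has degree at least $(m-1) - 2 = m-3$, since its non-neighbours among the survivors are exactly its $\overline{H_r}$-neighbours, of which there are at most $2$. For $m \ge 5$ we have $m - 3 \ge (m-1)/2$, and a graph on $m$ vertices with minimum degree at least $(m-1)/2$ is connected, so $\omega(H_r - S) = 1 \le |S|$. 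For $m \le 4$ we instead have $|S| = (r+2) - m \ge r - 2 \ge 3$, so it suffices to show $\omega(H_r - S) \le 2$; this follows because $\overline{H_r}$ is triangle-free and has maximum degree $2$, which prevents $H_r - S$ from splitting into three or more components on at most four vertices (an isolated vertex of $H_r - S$ would need to be $\overline{H_r}$-adjacent to all other survivors, forcing $m \le 3$, and a triangle in $\overline{H_r}$ would be needed to produce three singleton components). In all cases $\omega(H_r - S) \le |S|$, giving $\tau(H_r) \ge 1$.

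I expect the odd case to be the main obstacle, because it requires translating the toughness inequality into a statement about the complement $\overline{H_r}$ and then carefully ruling out the boundary configurations with only three or four surviving vertices; the even case is essentially a direct bipartite computation. The decisive structural facts are that $\overline{H_r}$ has maximum degree $2$ and is a forest, which together force the number of components of $H_r - S$ to stay bounded by $2$ whenever the graph is disconnected, comfortably below the size of any cutset for $r \ge 5$.
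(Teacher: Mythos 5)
Your proof is correct, but it takes a genuinely different route from the paper. The paper's proof is a one-liner: it exhibits an explicit Hamiltonian cycle in $H_r$ --- namely $w u_1 u_2 \ldots u_{r+1} w$ for odd $r$, and $w_a b_1 a_1 w_b a_2 b_2 a_3 b_3 \ldots a_{r-1} b_{r-1} w_a$ for even $r$ --- and then invokes Chv\'atal's observation (recalled in the introduction) that every Hamiltonian graph is $1$-tough. You instead verify the defining inequality $\omega(H_r - S) \le |S|$ directly for every cutset $S$: in the even case by analysing the complete bipartite graph minus one edge, and in the odd case by passing to the complement, which is a forest of maximum degree $2$, and combining a minimum-degree connectivity bound (for at least $5$ survivors) with a case analysis ruling out three or more components on at most $4$ survivors. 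I checked your boundary cases and they are sound: with $A' = \emptyset$ the count $\omega = r - |S \cap B| \le r + |S \cap B| = |S|$ is right; the ``missing edge isolates a vertex'' configurations indeed force $|S| \ge r-1$; and in the odd case an isolated survivor needs complement-degree $m-1 \le 2$ while three singleton components would need a triangle in a forest. The trade-off is clear: the Hamiltonian-cycle argument is far more economical and certifies the claim instantly, while your direct verification is self-contained (it never uses the Hamiltonicity-implies-$1$-tough fact) and exposes the structural reason why $H_r$ minus any cutset has few components, which is more information than the claim strictly requires.
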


\begin{proof}
 There is a Hamiltonian cycle in $H_r$, namely
  \[ w u_1 u_2 \ldots u_{r+1} w \]
 if $r$ is odd, and
  \[ w_a b_1 a_1 w_b a_2 b_2 a_3 b_3 \ldots a_{r-1} b_{r-1} w_a \]
 if $r$ is even,
 so $\tau(H_r) \ge 1$.
\end{proof}

\begin{lemma} \label{1/2_tough_odd_regular_conp_complete}
 For any fixed odd number $r \ge 5$ the problem {\scshape $1/2$-Tough} is coNP-complete for $r$-regular graphs.
\end{lemma}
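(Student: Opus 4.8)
The plan is to establish membership in coNP and then prove coNP-hardness by a reduction from the problem {\scshape $1$-Tough} restricted to $(r-1)$-regular graphs. Membership is immediate: for an $r$-regular input, a certificate for \emph{not} being $1/2$-tough is a vertex set $S$ with $\omega(G-S) > 2|S|$ (which is automatically a cutset), and this can be verified in polynomial time, while $r$-regular graphs form a subclass of all graphs. For the hardness direction I would exploit that, since $r$ is odd and $r \ge 5$, we have $r-1 \ge 4 \ge 3$, so by Theorem~\ref{1_tough_regular_conp_complete} recognizing $1$-tough $(r-1)$-regular graphs is already coNP-complete. Thus it suffices to transform, in polynomial time, an arbitrary connected $(r-1)$-regular graph $G$ into an $r$-regular graph $G^\ast$ with $\tau(G^\ast) \ge 1/2$ if and only if $\tau(G) \ge 1$ (disconnected $G$ is trivially a no-instance for {\scshape $1$-Tough}).

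The construction I propose uses the gadget $H_r$ as a pendant ``blob.'' Recall that $H_r$ (in its odd-$r$ form) is $r$-regular except for the single vertex $w$, whose degree is $r-1$, and that $\tau(H_r) \ge 1$ by Claim~\ref{toughness_of_H_r}. I would build $G^\ast$ by taking $G$ together with one disjoint copy $B_i$ of $H_r$ for each vertex $v_i$ of $G$, and adding the edge joining $v_i$ to the port $w$ of $B_i$. Every $v_i$ then has degree $(r-1)+1 = r$, the port of each blob has degree $(r-1)+1 = r$, and all remaining blob vertices keep degree $r$, so $G^\ast$ is $r$-regular and is clearly computable in polynomial time. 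The heart of the argument is the identity that for any $P \subseteq V(G)$, removing $P$ from $G^\ast$ detaches exactly the $|P|$ blobs sitting on the vertices of $P$ while splitting the core into $\omega(G-P)$ pieces, so $\omega(G^\ast - P) = \omega(G-P) + |P|$; hence $\omega(G^\ast - P) \le 2|P|$ is equivalent to $\omega(G-P) \le |P|$, i.e.\ to the $1$-tough condition for $G$ at $P$. In particular, if $G$ is not $1$-tough, a witnessing cutset $P$ yields $\omega(G^\ast - P) = \omega(G-P)+|P| \ge 2|P|+1$, so $\tau(G^\ast) < 1/2$.

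It remains to verify the converse, that if $G$ is $1$-tough then every cutset of $G^\ast$ respects the $1/2$-tough bound, and this is where I expect the main work to lie. For a general cutset $S'$ I would split it into its core part $P = S' \cap V(G)$ and the parts $S' \cap V(B_i)$ inside the blobs, and bound the number of components blob by blob. The crucial point is that, because $\tau(H_r) \ge 1$, deleting $s_i$ vertices from a blob creates at most $\max(1,s_i)$ pieces, and if the blob stays attached to the core through its port then all but one of these pieces merge into a core component; consequently blob-internal deletions are strictly less efficient at producing components than simply deleting the core vertex carrying the blob. Summing these estimates and using $\omega(G-P) \le |P|$ (valid for nonempty $P$ once $G$ is $1$-tough) yields $\omega(G^\ast - S') \le 2|S'|$ in every case, the borderline instances being single-port deletions, which create exactly two components. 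The main obstacle is thus the bookkeeping in this case analysis: one must argue that mixing core deletions, port deletions, and deeper blob deletions can never beat the pure core cut, so that the toughness of $G^\ast$ is dragged down to exactly $1/2$ precisely when $G$ fails to be $1$-tough, and no further.
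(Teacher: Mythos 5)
Your proposal is correct and is essentially the paper's own proof: the same reduction from {\scshape $1$-Tough} on $(r-1)$-regular graphs (coNP-complete by Theorem~\ref{1_tough_regular_conp_complete}), the same construction attaching a pendant copy of $H_r$ to each vertex $v_i$ through its degree-$(r-1)$ vertex $w$, the same easy direction, and the same converse accounting, which the paper carries out by splitting a cutset $S'$ into its core part $S_0$ and blob parts $S_i$ and using the $1$-toughness of $G$ and of $H_r$ (Claim~\ref{toughness_of_H_r}, via Proposition~\ref{obs_below1_v2}) to obtain $\omega(G^\ast - S') \le \omega(G-S_0) + |S_0| + \sum_i \omega(H_r^i - S_i) \le 2|S'|$. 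One slip in your sketch: when a blob stays attached to the core, it is \emph{exactly one} of its pieces (the one containing the port $w$) that merges into a core component, not ``all but one'' of them --- but with that correction your bookkeeping yields at most one new component per deleted vertex, which is precisely the paper's bound, and your own remark about the borderline single-port deletion shows you have the right picture.
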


\begin{proof}
 Obviously, $\text{\scshape $1/2$-Tough-$r$-Regular} \in \text{coNP}$. To prove that it is coNP-hard we reduce {\scshape $1$-Tough-$(r-1)$-Regular} (which is coNP-complete by Theorem~\ref{1_tough_regular_conp_complete}) to it.
 
 Let $G$ be an arbitrary connected ($r-1$)-regular graph on the vertices $v_1, \ldots, v_n$ and let $G'$ be defined as follows. For all $i \in [n]$ let
 \[ V_i = \{ w^i, u_1^i, \ldots, u_{r+1}^i \} \]
 and place the graph $H_r$ on the vertices of $V_i$ and also connect $v_i$ to $w^i$, see Figure~\ref{figure_1/2_tough_odd_regular_conp_complete}. It is easy to see that $G'$ is $r$-regular and can be constructed from $G$ in polynomial time. Now we prove that $G$ is 1-tough if and only if $G'$ is $1/2$-tough.
 
 \begin{figure}[H]
 \begin{center}
 \begin{tikzpicture}
  \tikzstyle{vertex}=[draw,circle,fill=black,minimum size=4,inner sep=0]
  
  \draw (0,0) ellipse (0.4 and 1.75);
  \node at (-1,2) {$G$};
  
  \node[vertex] (v1) at (0,1.375) [label={[xshift=-15pt, yshift=-10pt] $v_1$}] {};
  \draw[fill] (0,0.1) circle (0.3pt);
  \draw[fill] (0,0) circle (0.3pt);
  \draw[fill] (0,-0.1) circle (0.3pt);
  \node[vertex] (vn) at (0,-1.375) [label={[xshift=-15pt, yshift=-10pt] $v_n$}] {};
  
  \begin{scope}[shift={(2,1.375)}, scale=0.75]  
  \node[vertex] (w) at (180:1) [label={[xshift=-4pt, yshift=-2pt] $w^1$}] {};
  \node[vertex] (u1) at (180+1*360/7:1) [label={[xshift=0pt, yshift=-20pt] $u_1^1$}] {};
  \node[vertex] (u2) at (180+2*360/7:1) [label={[xshift=0pt, yshift=-20pt] $u_2^1$}] {};
  \coordinate (u3) at (180+3*360/7:1);
  \draw[fill] (1,0.2) circle (0.3pt);
  \draw[fill] (1,0) circle (0.3pt);
  \draw[fill] (1,-0.2) circle (0.3pt);
  \coordinate (u4) at (180+4*360/7:1);
  \node[vertex] (u5) at (180+5*360/7:1) [label={[xshift=0pt, yshift=-3pt] $u_r^1$}] {};
  \node[vertex] (u6) at (180+6*360/7:1) [label={[xshift=0pt, yshift=-3pt] $u_{r+1}^1$}] {};
  
  \draw (w) -- (u1);
  \draw (w) -- (u2);
  \draw (w) -- (u5);
  \draw (w) -- (u6);
  \draw (u1) -- (u2);
  \draw (u1) -- ($(u1)!0.85!(u3)$);
  \draw (u1) -- ($(u1)!0.85!(u4)$);
  \draw (u1) -- (u5);
  \draw (u2) -- ($(u2)!0.85!(u3)$);
  \draw (u2) -- ($(u2)!0.85!(u4)$);
  \draw (u2) -- (u6);
  \draw (u5) -- ($(u5)!0.85!(u3)$);
  \draw (u5) -- ($(u5)!0.85!(u4)$);
  \draw (u5) -- (u6);
  \draw (u6) -- ($(u6)!0.85!(u3)$);  
  \draw (u6) -- ($(u6)!0.85!(u4)$);
  
  \draw (w) -- (v1);
  
  \node at (1.75,0.5) {$H_r$};
  \end{scope}
  
  \begin{scope}[shift={(2,-1.375)}, scale=0.75]  
  \node[vertex] (w) at (180:1) [label={[xshift=-4pt, yshift=-2pt] $w^n$}] {};
  \node[vertex] (u1) at (180+1*360/7:1) [label={[xshift=0pt, yshift=-20pt] $u_1^n$}] {};
  \node[vertex] (u2) at (180+2*360/7:1) [label={[xshift=0pt, yshift=-20pt] $u_2^n$}] {};
  \coordinate (u3) at (180+3*360/7:1);
  \draw[fill] (1,0.2) circle (0.3pt);
  \draw[fill] (1,0) circle (0.3pt);
  \draw[fill] (1,-0.2) circle (0.3pt);
  \coordinate (u4) at (180+4*360/7:1);
  \node[vertex] (u5) at (180+5*360/7:1) [label={[xshift=0pt, yshift=-3pt] $u_r^n$}] {};
  \node[vertex] (u6) at (180+6*360/7:1) [label={[xshift=0pt, yshift=-3pt] $u_{r+1}^n$}] {};
  
  \draw (w) -- (u1);
  \draw (w) -- (u2);
  \draw (w) -- (u5);
  \draw (w) -- (u6);
  \draw (u1) -- (u2);
  \draw (u1) -- ($(u1)!0.85!(u3)$);
  \draw (u1) -- ($(u1)!0.85!(u4)$);
  \draw (u1) -- (u5);
  \draw (u2) -- ($(u2)!0.85!(u3)$);
  \draw (u2) -- ($(u2)!0.85!(u4)$);
  \draw (u2) -- (u6);
  \draw (u5) -- ($(u5)!0.85!(u3)$);
  \draw (u5) -- ($(u5)!0.85!(u4)$);
  \draw (u5) -- (u6);
  \draw (u6) -- ($(u6)!0.85!(u3)$);  
  \draw (u6) -- ($(u6)!0.85!(u4)$);
  
  \draw (w) -- (vn);
  
  \node at (1.75,0.5) {$H_r$};
  \end{scope}
 \end{tikzpicture}
 \caption{The graph $G'$ constructed in the proof of Lemma~\ref{1/2_tough_odd_regular_conp_complete}.} \label{figure_1/2_tough_odd_regular_conp_complete}
 \end{center}
 \end{figure}
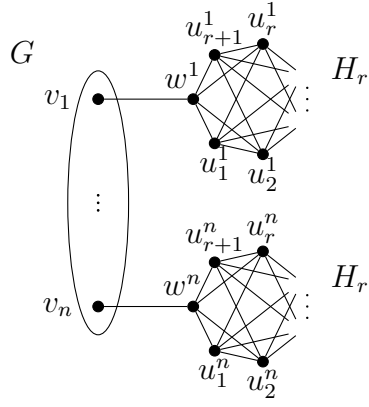
 
 If $G$ is not 1-tough, then there exists a cutset $S \subseteq V(G)$ satisfying $\omega(G-S) > |S|$. Then $S$ is also a cutset in $G'$ and 
 \[ \omega(G'-S) = \omega(G-S) + |S| > 2|S| \text{,} \]
 so $\tau(G') < 1/2$.
 
 Now assume that $G$ is 1-tough. Let $S \subseteq V(G')$ be an arbitrary cutset in $G'$, and let $S_0 = V(G) \cap S$ and $S_i = V_i \cap S$ for all $i \in [n]$.
 Using these notations it is clear that
 \[ S = S_0 \cup \left( \bigcup_{i=1}^n S_i \right) \]
 and
 \[ \omega(G' - S) \le \omega \big( G-S_0 \big) + |S_0| + \sum_{i=1}^n \omega(H_r^i-S_i) \text{,} \]
 where $H_r^i$ denotes the $i$-th copy of $H_r$, i.e. the graph on the vertex set $V_i$ for all $i \in [n]$.
 Since $G$ is 1-tough and by Claim~\ref{toughness_of_H_r}, so is $H_r$, it follows from Proposition~\ref{obs_below1_v2} that
 \[ \omega(G-S_0) \le |S_0| \]
 and
 \[ \omega(H_r^i-S_i) \le |S_i| \text{.} \]
 Therefore,
 \[ \omega(G' - S) \le |S_0| + |S_0| + \sum_{i=1}^n |S_i| \le 2|S| \text{,} \]
 so $\tau(G') \ge 1/2$.
\end{proof}

\begin{lemma} \label{1/2_tough_even_regular_conp_complete}
 For any fixed even number $r \ge 6$ the problem {\scshape $1/2$-Tough} is coNP-complete for $r$-regular graphs.
\end{lemma}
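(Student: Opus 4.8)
The plan is to follow the proof of Lemma~\ref{1/2_tough_odd_regular_conp_complete} almost verbatim; the single new feature is that the even graph $H_r$ has \emph{two} vertices of degree $r-1$, namely $w_a$ and $w_b$, instead of one. Membership in coNP is immediate (a witness is a set $S$ with $\omega(G'-S) > 2|S|$), so I focus on coNP-hardness and reduce from {\scshape $1$-Tough-$(r-2)$-Regular}, which is coNP-complete by Theorem~\ref{1_tough_regular_conp_complete} because $r-2 \ge 4 \ge 3$. In the odd case the gadget was hung on a vertex of an $(r-1)$-regular graph through its unique degree-$(r-1)$ vertex; here I would spend both free slots $w_a$ and $w_b$ on a single vertex of $G$, which is exactly why the source graph must be $(r-2)$-regular.

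Concretely, given a connected $(r-2)$-regular graph $G$ on $v_1,\dots,v_n$, I would build $G'$ by placing a disjoint copy $H_r^i$ of $H_r$ at each $v_i$ and joining $v_i$ to \emph{both} $w_a^i$ and $w_b^i$. Then $\deg v_i = (r-2)+2 = r$, $\deg w_a^i = \deg w_b^i = (r-1)+1 = r$, and every remaining vertex of $H_r^i$ keeps degree $r$, so $G'$ is $r$-regular and is constructed in polynomial time. As in the odd case, the point that makes the argument work is that each copy $H_r^i$ is attached to the rest of $G'$ only through $v_i$.

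It then remains to show that $G$ is $1$-tough if and only if $G'$ is $1/2$-tough. If $G$ is not $1$-tough, pick $S \subseteq V(G)$ with $\omega(G-S) > |S|$; this $S$ is a cutset of $G'$, and since deleting $v_i$ detaches the connected copy $H_r^i$ for each $v_i \in S$, we get $\omega(G'-S) = \omega(G-S) + |S| > 2|S|$, so $\tau(G') < 1/2$. For the converse, assuming $G$ is $1$-tough, I would take an arbitrary cutset $S$ of $G'$, write $S_0 = V(G)\cap S$ and $S_i = V(H_r^i)\cap S$, and establish the bound $\omega(G'-S) \le \omega(G-S_0) + |S_0| + \sum_{i=1}^n \omega(H_r^i - S_i)$. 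Since $G$ and $H_r$ are both $1$-tough (the latter by Claim~\ref{toughness_of_H_r}), Proposition~\ref{obs_below1_v2} yields $\omega(G-S_0) \le |S_0|$ and $\omega(H_r^i - S_i) \le |S_i|$, and summing gives $\omega(G'-S) \le 2|S_0| + \sum_i |S_i| \le 2|S|$, hence $\tau(G') \ge 1/2$.

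The step I expect to need the most care is the component inequality in the $1$-tough direction, in particular the extra additive term $|S_0|$. It is present because a copy with $v_i \in S_0$ but $S_i = \emptyset$ becomes a detached yet connected component at zero cost, so $\omega(H_r^i - S_i) \le |S_i|$ is too weak for such copies; the $|S_0|$ term absorbs exactly these intact detached gadgets. Because $v_i$ is now adjacent to both $w_a^i$ and $w_b^i$, I would verify explicitly that removing $v_i$ still leaves $H_r^i$ as one connected component and that no gadget creates an alternative path between two vertices of $G$, so that the split of the components of $G'-S$ into those meeting $V(G)\setminus S_0$ and those lying inside a single gadget is identical to the odd case. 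Everything else is a routine degree count.
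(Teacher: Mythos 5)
Your proposal matches the paper's proof essentially exactly: the paper also reduces from {\scshape $1$-Tough-$(r-2)$-Regular}, attaches a copy of $H_r$ to each vertex $v_i$ through both $w_a^i$ and $w_b^i$ to restore $r$-regularity, and then invokes the same two-direction argument as in the odd case (which you correctly spell out, including the role of the $|S_0|$ term in absorbing intact detached gadgets). No substantive difference in approach; the paper simply leaves the ``similarly as in Lemma~\ref{1/2_tough_odd_regular_conp_complete}'' details implicit where you write them out.
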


\begin{proof}
 Obviously, $\text{\scshape $1/2$-Tough-$r$-Regular} \in \text{coNP}$. To prove that it is coNP-hard we reduce {\scshape $1$-Tough-$(r-2)$-Regular} (which is coNP-complete by Theorem~\ref{1_tough_regular_conp_complete}) to it.
 
 Let $G$ be an arbitrary connected ($r-2$)-regular graph on the vertices $v_1, \ldots, v_n$ and let $G'$ be defined as follows. For all $i \in [n]$ let
 \[ A_i = \{ w_a^i, a_1^i, \ldots, a_{r-1}^i \} \text{,} \qquad B_i = \{ w_b^i, b_1^i, \ldots, b_{r-1}^i \} \]
 and place the graph $H_r$ on the color classes $A_i$ and $B_i$ and also connect $v_i$ to $w_a^i$ and $w_b^i$, see Figure~\ref{figure_1/2_tough_even_regular_conp_complete}. It is easy to see that $G'$ is $r$-regular and can be constructed from $G$ in polynomial time.
 
 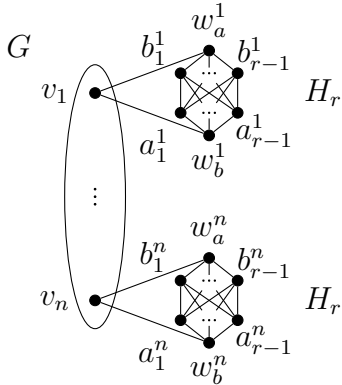
\begin{figure}[H]
 \begin{center}
 \begin{tikzpicture}
  \tikzstyle{vertex}=[draw,circle,fill=black,minimum size=4,inner sep=0]
  
  \draw (0,0) ellipse (0.4 and 1.75);
  \node at (-1,2) {$G$};
  
  \node[vertex] (v1) at (0,1.375) [label={[xshift=-15pt, yshift=-10pt] $v_1$}] {};
  \draw[fill] (0,0.1) circle (0.3pt);
  \draw[fill] (0,0) circle (0.3pt);
  \draw[fill] (0,-0.1) circle (0.3pt);
  \node[vertex] (vn) at (0,-1.375) [label={[xshift=-15pt, yshift=-10pt] $v_n$}] {};
  
  \begin{scope}[shift={(1.5,1.375)}, scale=0.75]  
  \node[vertex] (wa) at (0,0.75) [label={[xshift=0pt, yshift=-1pt] $w_a^1$}] {};
  \node[vertex] (b1) at (-0.5,0.35) [label={[xshift=-10pt, yshift=-3pt] $b_1^1$}] {};
  \coordinate (b2) at (0,0.35);
  \draw[fill] (-0.1,0.35) circle (0.3pt);
  \draw[fill] (0,0.35) circle (0.3pt);
  \draw[fill] (0.1,0.35) circle (0.3pt);
  \node[vertex] (b3) at (0.5,0.35) [label={[xshift=10pt, yshift=-6pt] $b_{r-1}^1$}] {};
  \node[vertex] (a1) at (-0.5,-0.35) [label={[xshift=-10pt, yshift=-25pt] $a_1^1$}] {};
  \coordinate (a2) at (0,-0.35);
  \draw[fill] (-0.1,-0.35) circle (0.3pt);
  \draw[fill] (0,-0.35) circle (0.3pt);
  \draw[fill] (0.1,-0.35) circle (0.3pt);
  \node[vertex] (a3) at (0.5,-0.35) [label={[xshift=10pt, yshift=-20pt] $a_{r-1}^1$}] {};
  \node[vertex] (wb) at (0,-0.75) [label={[xshift=0pt, yshift=-22pt] $w_b^1$}] {};
  
  \draw (wa) -- (b1);
  \draw (wa) -- ($(wa)!0.75!(b2)$);
  \draw (wa) -- (b3);
  \draw (wb) -- (a1);
  \draw (wb) -- ($(wb)!0.75!(a2)$);
  \draw (wb) -- (a3);
  \draw (a1) -- (b1);
  \draw (a1) -- ($(a1)!0.75!(b2)$);
  \draw (a1) -- (b3);
  \draw (a3) -- (b1);
  \draw (a3) -- ($(a3)!0.75!(b2)$);
  \draw (a3) -- (b3);
  \draw (b1) -- ($(b1)!0.75!(a2)$);
  \draw (b3) -- ($(b3)!0.75!(a2)$);
  
  \draw (wa) -- (v1) -- (wb);
  
  \node at (2,0) {$H_r$};
  \end{scope}
  
  \begin{scope}[shift={(1.5,-1.375)}, scale=0.75]  
  \node[vertex] (wa) at (0,0.75) [label={[xshift=0pt, yshift=-1pt] $w_a^n$}] {};
  \node[vertex] (b1) at (-0.5,0.35) [label={[xshift=-10pt, yshift=-3pt] $b_1^n$}] {};
  \coordinate (b2) at (0,0.35);
  \draw[fill] (-0.1,0.35) circle (0.3pt);
  \draw[fill] (0,0.35) circle (0.3pt);
  \draw[fill] (0.1,0.35) circle (0.3pt);
  \node[vertex] (b3) at (0.5,0.35) [label={[xshift=10pt, yshift=-6pt] $b_{r-1}^n$}] {};
  \node[vertex] (a1) at (-0.5,-0.35) [label={[xshift=-10pt, yshift=-25pt] $a_1^n$}] {};
  \coordinate (a2) at (0,-0.35);
  \draw[fill] (-0.1,-0.35) circle (0.3pt);
  \draw[fill] (0,-0.35) circle (0.3pt);
  \draw[fill] (0.1,-0.35) circle (0.3pt);
  \node[vertex] (a3) at (0.5,-0.35) [label={[xshift=10pt, yshift=-20pt] $a_{r-1}^n$}] {};
  \node[vertex] (wb) at (0,-0.75) [label={[xshift=0pt, yshift=-22pt] $w_b^n$}] {};
  
  \draw (wa) -- (b1);
  \draw (wa) -- ($(wa)!0.75!(b2)$);
  \draw (wa) -- (b3);
  \draw (wb) -- (a1);
  \draw (wb) -- ($(wb)!0.75!(a2)$);
  \draw (wb) -- (a3);
  \draw (a1) -- (b1);
  \draw (a1) -- ($(a1)!0.75!(b2)$);
  \draw (a1) -- (b3);
  \draw (a3) -- (b1);
  \draw (a3) -- ($(a3)!0.75!(b2)$);
  \draw (a3) -- (b3);
  \draw (b1) -- ($(b1)!0.75!(a2)$);
  \draw (b3) -- ($(b3)!0.75!(a2)$);
  
  \draw (wa) -- (vn) -- (wb);
  
  \node at (2,0) {$H_r$};
  \end{scope}
 \end{tikzpicture}
 \caption{The graph $G'$ constructed in the proof of Lemma~\ref{1/2_tough_even_regular_conp_complete}.} \label{figure_1/2_tough_even_regular_conp_complete}
 \end{center}
 \end{figure}
 
 Similarly as in the proof of Lemma \ref{1/2_tough_odd_regular_conp_complete}, it can be shown that $G$ is 1-tough if and only if $G'$ is $1/2$-tough.
\end{proof}

\textbf{Proof of Theorem~\ref{1/2_tough_at_least_5_regular_conp_complete}.}
The theorem directly follows from Lemmas~\ref{1/2_tough_odd_regular_conp_complete} and \ref{1/2_tough_even_regular_conp_complete}. \hfill\qed

\bigskip

Using this result, we can prove Theorem~\ref{1_tough_at_least_6_regular_bipartite_conp_complete}.

\bigskip

\textbf{Proof of Theorem~\ref{1_tough_at_least_6_regular_bipartite_conp_complete}.}
 Obviously, $\text{\scshape $1$-Tough-$r$-Regular-Bipartite} \in \text{coNP}$. To prove that it is coNP-hard we reduce {\scshape $1/2$-Tough-$(r-1)$-Regular} (which is coNP-complete by Theorem~\ref{1/2_tough_at_least_5_regular_conp_complete}) to it.
 
 Let $G$ be an arbitrary connected $(r-1)$-regular graph and let $B(G)$ denote the bipartite graph defined at the beginning of Section~\ref{section_bipartite_graphs}. Then $B(G)$ is $r$-regular and by Claim \ref{toughness_of_BG}, the graph $G$ is $1/2$-tough if and only if $B(G)$ is 1-tough. \hfill\qed

\bigskip

For any $r \in \{ 3,4,5 \}$ the problem of determining the complexity of {\scshape $1$-Tough-$r$-Regular-Bipartite} remains open. The reason why our construction does not work in these cases is that we can decide in polynomial time whether an at most 4 regular graph is $1/2$-tough, which we prove in the rest of this paper.

\begin{lemma} \label{less_than_2/3_tough_3_regular_equivalent_characterization}
 For any connected 3-regular graph $G$, the following are equivalent.
 \begin{enumerate}
  \item[$(1)$] There is a cut-vertex in $G$.
  \item[$(2)$] $\tau(G) \le 1/2$.
  \item[$(3)$] $\tau(G) < 2/3$.
 \end{enumerate}
\end{lemma}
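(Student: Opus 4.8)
The plan is to establish the three conditions are equivalent through the cycle $(1) \Rightarrow (2) \Rightarrow (3) \Rightarrow (1)$, where the first two implications are immediate and the last one carries all the content. For $(1) \Rightarrow (2)$: if $v$ is a cut-vertex of $G$, then $S = \{v\}$ is a cutset with $\omega(G-v) \ge 2$, so
\[ \tau(G) \le \frac{|S|}{\omega(G-S)} = \frac{1}{\omega(G-v)} \le \frac{1}{2} \text{.} \]
(Here $\tau(G)$ is given by the minimum formula of Proposition~\ref{possible_values_of_toughness}, since a graph with a cut-vertex is noncomplete.) The implication $(2) \Rightarrow (3)$ is trivial, as $1/2 < 2/3$.

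The real work is in $(3) \Rightarrow (1)$, which I would prove in its contrapositive form: if $G$ has no cut-vertex then $\tau(G) \ge 2/3$. Since $G$ is connected, cubic, and cut-vertex-free, it is $2$-connected. Let $S$ be an arbitrary cutset and write $c = \omega(G-S) \ge 2$, with components $C_1, \ldots, C_c$. The crucial observation is that $2$-connectivity forces each $C_i$ to have at least two distinct neighbours in $S$: were all edges from $C_i$ to $S$ incident to a single vertex $s \in S$, then since $c \ge 2$ there is another component, so removing $s$ would separate $C_i$ from it, making $s$ a cut-vertex -- a contradiction. Hence each $C_i$ sends at least two edges into $S$, so the number of edges between $S$ and $G-S$ is at least $2c$. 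On the other hand, summing degrees over $S$ and using that each vertex of $S$ has degree $3$, this number of edges is at most $3|S|$ (each edge inside $S$ being counted twice and each edge leaving $S$ once). Combining the two bounds gives
\[ 2\,\omega(G-S) \le 3|S| \text{,} \]
so $|S|/\omega(G-S) \ge 2/3$ for every cutset $S$, and therefore $\tau(G) \ge 2/3$. The case where $G$ admits no disconnecting set at all is trivial, since then $\tau(G) = \infty$.

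I expect the only delicate step to be the claim that every component of $G-S$ has at least two neighbours in $S$; this is precisely where $2$-connectivity enters, and it relies on $\omega(G-S) \ge 2$ to guarantee a second component off which the hypothetical single common neighbour would cut $C_i$. Once this is secured, the double counting -- bounded below by the $2$-connectivity contribution of the components and above by the $3$-regularity of $S$ -- finishes the proof with no further calculation.
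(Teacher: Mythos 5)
Your proposal is correct and is essentially the paper's own argument: the paper also treats $(1)\Rightarrow(2)$ and $(2)\Rightarrow(3)$ as trivial, and its proof of $(3)\Rightarrow(1)$ is exactly your edge-counting step read in the other direction (a cutset $S$ with $\omega(G-S) > \tfrac{3}{2}|S|$ must, by $3$-regularity, leave some component with a single neighbour in $S$, and that neighbour is a cut-vertex). Writing it in contrapositive form via $2$-connectivity changes nothing substantive, so there is nothing to fix.
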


\begin{proof} $\empty$

 $(1) \Longrightarrow (2):$ Trivial.
 
 \medskip
 
 $(2) \Longrightarrow (3):$ Trivial.
 
 \medskip
 
 $(3) \Longrightarrow (1):$ If $\tau(G) < 2/3$, then there exists a cutset $S \subseteq V(G)$ satisfying 
 \[ \omega(G-S) > \frac{3}{2} |S| \text{.} \]
 Hence there must exist a component of $G-S$ that has exactly one neighbor in $S$: since $G$ is connected, every component has at least one neighbor in $S$, and if every component of $G-S$ had at least two neighbors in $S$, then the number of edges going into $S$ would be at least $2 \omega(G-S) > 3|S|$, which would contradict the 3-regularity of $G$. Obviously, this neighbor in $S$ is a cut-vertex in $G$.
\end{proof}

\textbf{Proof of Theorem \ref{less_than_2/3_tough_3_regular_poly}.}
 Let $G$ be an arbitrary connected 3-regular graph. First check whether $G$ contains a cut-vertex. By Lemma~\ref{less_than_2/3_tough_3_regular_equivalent_characterization}, if it does not, then $\tau(G) \ge 2/3$, but if it does, then $\tau(G) \le 1/2$. We prove that in the latter case either $\tau(G) = 1/3$ or $\tau(G) = 1/2$, and we can also decide in polynomial time which one holds.
 
 Since $G$ is 3-regular, $\omega(G-S) \le 3|S|$ holds for any cutset $S$ of $G$, so $\tau(G) \ge 1/3$. Now we show that if $\tau(G) < 1/2$, then $\tau(G) \le 1/3$.
 
 
 
 So assume that $\tau(G) < 1/2$ and let $S$ be a tough set of $G$ and let $k = \omega(G-S)$. Then $k > 2|S|$. Contract the components of $G-S$ into single vertices $u_1, \ldots, u_k$ while keeping the multiple edges and let $H$ denote the obtained multigraph. Since $G$ is connected, $d(u_i) \ge 1$ holds for any $i \in [k]$, so
 \[ k = \big| \{i \in [k] : d(u_i) = 1 \} \big| + \big| \{i \in [k] : d(u_i) \ge 2 \} \big| \text{.} \]
 Since $G$ is 3-regular,
 \begin{gather*}
  3|S| \ge \sum_{i=1}^{k} d(u_i) \ge \big| \{i \in [k] : d(u_i) = 1 \} \big| + 2 \cdot \big| \{i \in [k] : d(u_i) \ge 2 \} \big| \\
  = k + \big| \{i \in [k] : d(u_i) \ge 2 \} \big| > 2|S| + \big| \{i \in [k] : d(u_i) \ge 2 \} \big|\text{,}
 \end{gather*}
 so
 \[ |S| > \big| \{i \in [k] : d(u_i) \ge 2 \} \big| \text{.} \]
 Therefore,
 \[ \big| \{i \in [k] : d(u_i) = 1 \} \big| = k - \big| \{i \in [k] : d(u_i) \ge 2 \} \big| > 2|S| - |S| = |S| \text{,} \]
 which means that there exists a vertex in $S$ having at least two neighbors in $\{ u_1, \ldots, u_k \}$ of degree 1. Then the removal of this vertex leaves at least three components (and note that since $G$ is 3-regular, it cannot leave more than three components), so $\tau(G) \le 1/3$.
 
 From this it also follows that $\tau(G) = 1/3$ if and only if there exists a cut-vertex whose removal leaves three components.
 
 To summarize, it can be decided in polynomial time whether a connected 3-regular graph is $2/3$-tough, and if it is not, than its toughness is either 1/3 or 1/2. In both cases the graph contains at least one cut-vertex, and if the removal of any of them leaves (at least) three components, than the toughness of the graph is 1/3, otherwise it is 1/2. \hfill\qed
 
 \bigskip

\begin{claim} \label{toughness_of_connected_4_regular_graphs}
 The toughness of any connected 4-regular graph is at least $1/2$.
\end{claim}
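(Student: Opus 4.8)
The plan is to show directly that $\omega(G-S) \le 2|S|$ for every cutset $S$, which is exactly the inequality defining $1/2$-toughness. First I would fix an arbitrary cutset $S \subseteq V(G)$ with $k := \omega(G-S) \ge 2$ and list the components $C_1, \ldots, C_k$ of $G-S$; the goal then becomes $k \le 2|S|$.

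The key step is a parity observation about the edges leaving each component. For a fixed component $C_i$, the number of edges joining $C_i$ to $S$ equals $\sum_{v \in C_i} d(v) - 2 e(C_i)$, where $e(C_i)$ denotes the number of edges inside $C_i$. Since $G$ is $4$-regular, $\sum_{v \in C_i} d(v) = 4|C_i|$ is even, so this boundary-edge count is even. Because $G$ is connected, each $C_i$ is joined to $S$ by at least one edge; being even and positive, the number of such edges is therefore at least $2$.

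I would then finish by a double count of the edges between $S$ and $V(G) \setminus S$. Summing the previous bound over all $k$ components shows that there are at least $2k$ such edges, while on the other hand each vertex of $S$ has degree $4$, so there are at most $4|S|$ of them. Hence $2k \le 4|S|$, i.e. $k \le 2|S|$, giving $\tau(G) \ge 1/2$.

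The argument has essentially no obstacle beyond the parity remark; the one point that must be handled carefully is ensuring that every component contributes at least \emph{two} boundary edges rather than just one, which is precisely where the connectedness of $G$ is combined with the even-parity count. In particular, no case analysis on the sizes of the components is needed, in contrast to the cut-vertex approach used in the $3$-regular case (Lemma~\ref{less_than_2/3_tough_3_regular_equivalent_characterization}), where $3|S|$ is odd-weighted and the clean parity bound is unavailable.
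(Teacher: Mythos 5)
Your proof is correct and is essentially identical to the paper's argument: both use the parity observation that each component of $G-S$ is joined to $S$ by an even, hence at least two, number of edges (since $4|C_i| - 2e(C_i)$ is even), and then double-count against the at most $4|S|$ edges incident to $S$ to get $\omega(G-S) \le 2|S|$. No gap; nothing further is needed.
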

\begin{proof}
 Let $G$ be a  connected 4-regular graph and let $S$ be an arbitrary cutset in $G$ and $L$ be a component of $G-S$. Since every vertex has degree 4 in $G$, the number of edges between $S$ and $L$ is even (more precisely, it is equal to the sum of the degrees in $G$ of the vertices of $L$ minus two times the number of edges induced by $L$). Since $G$ is connected, the number of these edges is at least two. On the other hand, since $G$ is 4-regular, there are at most $4|S|$ edges between $S$ and $L$. Therefore $\omega(G-S) \le 2|S|$, which means that $\tau(G) \ge 1/2$.
\end{proof}
\textbf{Proof of Theorem \ref{less_than_2/3_tough_4_regular_poly}.} It directly follows from Claim~\ref{toughness_of_connected_4_regular_graphs}. \hfill\qed

\section{Acknowledgment}
The research of the first author was supported by National Research,
Development and Innovation Office NKFIH, K-116769 and K-124171, by the
National Research, Development and Innovation Fund
(TUDFO/51757/2019-ITM, Thematic Excellence Program), and by the Higher
Education Excellence Program of the Ministry of Human Capacities in
the frame of Artificial Intelligence research area of Budapest
University of Technology and Economics (BME FIKP-MI/SC).  The research
of the second author was supported by National Research, Development
and Innovation Office NKFIH, K-124171.

\begin{bibdiv}
\begin{biblist}
\bib{9/4}{article}{
 title={Not every 2-tough graph is Hamiltonian},
 author={D. Bauer},
 author={H. J. Broersma},
 author={H. J. Veldman},
 journal={Discrete Applied Mathematics},
 volume={99},
 pages={317--321},
 date={2000},
}

\bib{recognize_toughness_cubic}{article}{
 title={The complexity of recognizing tough cubic graphs},
 author={D.~Bauer},
 author={J.~van~den~Heuvel},
 author={A.~Morgana},
 author={E.~Schmeichel},
 journal={Discrete Applied Mathematics},
 volume={79},
 pages={35--44},
 date={1997},
}

\bib{recognize_toughness_regular}{article}{
 title={The Complexity of Toughness in Regular Graphs},
 author={D.~Bauer},
 author={J.~van~den~Heuvel},
 author={A.~Morgana},
 author={E.~Schmeichel},
 journal={Congressus Numerantium},
 volume={130},
 pages={47--61},
 date={1998},
}

\bib{recognize_toughness}{article}{
 title={Recognizing tough graphs is NP-hard},
 author={D. Bauer},
 author={S. L. Hakimi},
 author={E. Schmeichel},
 journal={Discrete Applied Mathematics},
 volume={28},
 pages={191--195},
 date={1990},
}

\bib{recognize_toughness_v2}{article}{
 title={On the complexity of recognizing tough graphs},
 author={D. Bauer},
 author={A. Morgana},
 author={E. Schmeichel},
 journal={Discrete Mathematics},
 volume={124},
 pages={13--17},
 date={1994},
}

\bib{toughness_intro}{article}{
 title={Tough graphs and hamiltonian circuits},
 author={V.~Chvátal},
 journal={Discrete Mathematics},
 volume={5},
 pages={215-228},
 date={1973},
}

\bib{3/2_tough_characterization}{article}{
 title={A characterization of 3/2-tough cubic graphs},
 author={B.~Jackson},
 author={P.~Katerinis},
 journal={Ars Combinatoria},
 volume={38},
 pages={145-148},
 date={1994},
}

\bib{recognize_toughness_bipartite}{article}{
 title={Toughness, hamiltonicity and split graphs},
 author={D.~Kratsch},
 author={J.~Lehel},
 author={H.~Müller},
 journal={Discrete Mathematics},
 volume={150},
 pages={231-245},
 date={1996},
}

\bib{dp_intro}{article}{
 title={The Complexity of Facets (and Some Facets of Complexity)},
 author={C.~H.~Papadimitriou},
 author={M.~Yannakakis},
 journal={Journal of Computer and System Sciences},
 volume={28},
 pages={244--259},
 date={1984},
}
\end{biblist}
\end{bibdiv}

\end{document}